\documentclass[10pt,conference,letterpaper]{IEEEtran}

\usepackage[OT1]{fontenc}
\usepackage{cite}
\usepackage{latexsym}
\usepackage{amsmath}
\usepackage{amsthm}
\usepackage{amssymb}
\usepackage{mathrsfs}

\usepackage{subfigure}
\usepackage{algorithm}
\usepackage{algorithmic}
\usepackage{multirow}
\usepackage{bbm}
\usepackage{arydshln}
\usepackage{url}
\usepackage{pict2e,picture} 
\usepackage{tikz}   
\usepackage{pifont} 
\usepackage{enumitem}
\usepackage[bottom]{footmisc}

\newcommand{\f}{\mathbf{f}}

\newcommand{\I}{\mathbf{I}}

\newcommand{\W}{\mathbf{W}}

\newcommand{\x}{\mathbf{x}}

\newcommand{\y}{\mathbf{y}}
\newcommand{\z}{\mathbf{z}}
\newcommand{\Z}{\mathbf{Z}}

\newcommand{\1}{\mathbf{1}}

\newtheorem{thm}{Theorem}

\newtheorem{lem}{Lemma}

\newtheorem{prop}[thm]{Proposition}
\newtheorem{ex}{Example}
\newtheorem{defn}{Definition}

\newtheorem{rem}{Remark}

\newtheorem{assum}{Assumption}

\begin{document}

\title{Compressed Distributed Gradient Descent: Communication-Efficient Consensus over Networks}



\author{Xin Zhang$^{\dag}$ \mbox{\hspace{0.4cm}} Jia Liu$^{\ddag}$ \mbox{\hspace{0.4cm}} Zhengyuan Zhu$^{\dag}$ \mbox{\hspace{0.4cm}} Elizabeth S. Bentley$^{*}$
\\ $^{\dag}$Department of Statistics, Iowa State University
\\ $^{\ddag}$Department of Computer Science, Iowa State University
\\ $^{*}$Air Force Research Laboratory, Information Directorate
\thanks{
{\color{blue} 
This work has been supported by NSF grants CNS-1527078, 1514260, 1446582, 1409336, 1012700, WiFiUS-1456806, ECCS-1444026, 1232118; ONR grant N00014-15-1-2166; ARO grant W911NF-14-1-0368; DTRA grants HDTRA 1-14-1-0058, 1-15-1-0003, AFRL VFRP'15 award; DARPA grant HROOll-15-C-0097and QNRF grant NPRP 7-923-2-344.
DISTRIBUTION STATEMENT A: Approved for Public Release; distribution unlimited 88ABW-2015-5989 on Dec. 14, 2015. 
}
}
}

\maketitle

\begin{abstract}
Network consensus optimization has received increasing attention in recent years and has found important applications in many scientific and engineering fields.
To solve network consensus optimization problems, one of the most well-known approaches is the distributed gradient descent method (DGD).
However, in networks with slow communication rates, DGD's performance is unsatisfactory for solving high-dimensional network consensus problems due to the communication bottleneck.
This motivates us to design a communication-efficient DGD-type algorithm based on compressed information exchanges.
Our contributions in this paper are three-fold:
i) We develop a communication-efficient algorithm called amplified-differential compression DGD (ADC-DGD) and show that it converges under {\em any} unbiased compression operator;
ii) We rigorously prove the convergence performances of ADC-DGD and show that they match with those of DGD without compression;
iii) We reveal an interesting phase transition phenomenon in the convergence speed of ADC-DGD.
Collectively, our findings advance the state-of-the-art of network consensus optimization theory.
\end{abstract}

\section{Introduction} \label{sec:intro}

In recent years, network consensus optimization has received increasing attention  thanks to its generality and wide applicability.
To date, network consensus optimization has found important applications in many  scientific and engineering fields, e.g., distributed sensing in wireless sensor networks\cite{ling2010decentralized,predd2005distributed,schizas2008consensus,zhao2002information}, decentralized machine learning\cite{duchi2012dual,tsianos2012consensus}, multi-agent robotic systems\cite{cao2013overview,ren2007information,zhou2011multirobot}, smart grids\cite{giannakis2013monitoring,kekatos2013distributed}, to name just a few.
Simply speaking, in a network consensus optimization problem, each node only has access to some component of the global objective function.
That is, the global objective function is only partially known at each node.
Through communications with local neighbors, all nodes in the network collaborate with each other and try to reach a consensus on an optimal solution, which minimizes the global objective function.

Among various algorithms for solving network consensus optimization problems, one of the most effective methods is the distributed gradient descent (DGD) algorithm, a first-order iterative method developed by Nedic and Ozdaglar approximately a decade ago\cite{nedic2009distributed}.
The enduring popularity of DGD is primarily due to its implementation simplicity and elegant networking interpretation: 
In each iteration of DGD, each node performs an update by using a linear combination of a gradient step with respect to its local objective function
and a weighted average from its local neighbors (also termed as a consensus step).
It has been shown that DGD enjoys the same $O(1/k)$ convergence speed as the classical gradient descent method, where $k$ denotes the number of iterations\cite{nedic2009distributed}.
The simplicity and salient features of DGD have further inspired a large number of extensions to various network settings (see Section~\ref{sec:related_work} for more in-depth discussions).

However, despite its theoretical and engineering appeals, the performance of DGD may not always be satisfactory in practice.
This is particularly true for solving a high-dimensional consensus problem over a network with low network communication speed.
In this case, due to the large amount of data sharing and the communication bottleneck, exchanging full high-dimensional information between neighboring nodes is time-consuming (or even infeasible), which significantly hinders the overall convergence of DGD.
To improve the convergence speed, several second-order approaches using Hessian approximation (with respect to local objective function) have been proposed (see, e.g., \cite{jordan2016communication,wang2016efficient}).
Although these second-order methods converge in a fewer number of iterations (hence less information exchanges), they require matrix inversion in each iteration, implying  a $\Omega(d^{2}\log d)$ per-iteration complexity for a $d$-dimensional problem.
Hence, for high-dimensional consensus problems (i.e., large $d$), low-complexity first-order methods remain more preferable in practice.

To address DGD's limitations in high-dimensional network consensus over low-speed networks, a naturally emerging idea is to {\em compress} the information exchanged between nodes.
Specifically, by compressing the information in a high-dimensional state space to a smaller set of quantized states, each node can use a codebook  to represent the quantized states with a small number of bits.
Then, rather than directly transmitting full information, each node can just transmit the small-size codewords, which significantly reduces the communication burden.
Moreover, from a cybersecurity standpoint, transmitting compressed information is also very helpful because each node can encrypt its codebook and avoid revealing full information to potential eavesdroppers in the network.

However, with compressed information being adopted in DGD, several fundamental questions immediately arise:
{\em i) Will DGD with compressed information exchanges still converge?
ii) If the answer to i) is no, could we modify DGD to make it work with compressed information?
iii) If the answer to ii) is yes, how fast does this modified DGD method converge?
}
Indeed, answering all these questions are highly non-trivial and they constitute the main subjects of this paper.
The main contribution in this paper is that we provide concrete answers to all three fundamental questions.
Our key results and their significance are summarized as follows:



\begin{list}{\labelitemi}{\leftmargin=1em \itemindent=-0.5em \itemsep=.2em}
\item First, we show that DGD with straightforward compressed information exchange fails to converge because of a non-vanishing accumulated noise term resulted from compression over iterations.
This motivates us to develop a noise variance reduction method.
To this end, we propose a new idea called ``amplified-differential compression DGD'' (ADC-DGD), where, instead of directly exchanging compressed estimates of the global optimization variable in DGD, we exchange an {\em amplified} version of the {\em state differential} between consecutive iterations, hence the name.
We show that ADC-DGD effectively diminishes the accumulated noise from compression and induces convergence. 


\item 
We show that, under {\em any} unbiased compression operator, our ADC-DGD method converges at rate $O(1/k)$ to an $O(\alpha^{2})$-neighborhood of an optimal solution with a constant step-size $\alpha$.
Under diminishing step-sizes, ADC-DGD converges asymptotically at rate $O(1/\sqrt{k})$ to an optimal solution.
We note that these convergence rates are the {\em best possible} in the sense that they match with those of the original DGD without compression.
This result is surprising since the information loss due to compression could be large.
We also note that the convergence rate of ADC-DGD outperforms other existing distributed first-order methods with compression (see Section~\ref{sec:related_work} for detailed discussions).


\item Based on the above convergence results of ADC-DGD, we further investigate the impacts of ADC-DGD's amplifying factor on convergence speed and communication load.
Interestingly, we reveal a phase transition phenomenon of the convergence speed with respect to the amplification exponent $\gamma$ in ADC-DGD.
Specifically, when $\gamma \in (\frac{1}{2}, 1]$ (sublinear growth of amplification), convergence speed approaches that of DGD as $\gamma$ increases.
However, as soon as $\gamma >1$, there is no further convergence speed improvement but network communication load continues to grow.
This shows that $\gamma=1$ is a critical point, under which we can trade communication overhead for convergence speed.
\end{list}

Collectively, our results contribute to a growing theoretical foundation of network consensus optimization.
The rest of the paper is organized as follows. 
In Section~\ref{sec:related_work}, we review related work. 
In Section~\ref{sec:method}, we introduce the network consensus optimization problem and show that DGD with compressed information exchange fails to converge.
In Section~\ref{sec:adc_dgd}, we present our ADC-DGD algorithm and its convergence performance analysis. 
Numerical results are provided in Section~\ref{sec:numerical} and Section~\ref{sec:conclusion} concludes this paper.  

\section{Related Work} \label{sec:related_work}

In this section, we first provide a quick overview on the historical development of DGD-type algorithms.
We then focus on the recent advances of communication-conscious  network consensus optimization, including related work that utilize compression.

\smallskip
{\em 1) DGD-Based Algorithms for Network Consensus:}
Network consensus optimization can trace its roots to the seminal work by Tsitsiklis \cite{tsitsiklis1984problems}, where the system model and the analysis framework were first developed.
As mentioned earlier, a well-known method for solving network consensus optimization is the distributed (sub)gradient descent (DGD) method, which was proposed by Nedic and Ozdaglar in \cite{nedic2009distributed}.
DGD was recently reexamined in \cite{yuan2016convergence} by Yuan {\em et al.} using a new Lyapunov technique, which offers further mathematical understanding of its convergence performance.
In their follow-up work \cite{zeng2016nonconvex}, the convergence behavior of DGD was further analyzed for non-convex problems. 
Recently, several DGD variants have been proposed to enhance the convergence performance (e.g., achieving the same $O(1/k)$ convergence rate with constant step-size \cite{shi2015extra} or even under time-varying network graphs \cite{nedi2017achieving}).

{\em 2) Communication-Conscious Distributed Optimization:}
As mentioned earlier, studies have shown that communication costs of DGD could be a major concern in practice.
To this end, Chow {\em et al.} \cite{chow2016expander} studied the tradeoff between communication requirements and prescribed accuracy.
In \cite{berahas2017balancing}, Berahas {\em et al.} developed an adaptive DGD framework called $\text{DGD}^t$ to balance the costs between communication and computation.
Here, the parameter $t$ represents the number of consensus steps performed per gradient descent step ($t=1$ corresponding to the original DGD).
The larger the $t$-value, the cheaper the communication cost, and vice versa.
The most related works to ours are by Reisizadeh {\em et al.}\cite{reisizadeh2018quantized} and Tang {\em et al.}\cite{tang2018decentralization}, which also consider adopting compression in DGD.
The algorithm in \cite{reisizadeh2018quantized}, focusing on the strongly convex problems, used the diminishing step-size strategy to guarantee the convergence. In our work, we extend the strongly convexity assumption and prove the convengence both on diminishing step-size and constant step-size.
However, our algorithm differs from \cite{tang2018decentralization} in the following key aspects: 
i) The compression in \cite{tang2018decentralization} uses a quantized extrapolation between two successive iterates, which can be viewed as a diminishing step-size strategy.
In contrast, our ADC-DGD algorithm uses an amplified differential of two successive  iterates.
As will be shown later, our algorithm can be interpreted as a {\em variance reduction} method; 
ii) Our convergence rate outperforms that of \cite{tang2018decentralization}. 
The fastest convergence rate of the algorithms in \cite{tang2018decentralization} is $O(\log(k)/\sqrt{k})$, 
while the convergence rate of our ADC-DGD algorithm is $o(1/\sqrt{k})$;
iii) To reach the best convergence rate in \cite{tang2018decentralization}, the extrapolation compression algorithm needs to solve a complex equation to obtain an optimal step-size. 
In contrast, our ADC-DGD algorithm uses the standard sublinearly diminishing step-sizes, which is of much {\em lower} complexity and can be easily implemented in practice.

\section{Network Consensus Optimization and Distributed Gradient Descent} \label{sec:method}

In Section~\ref{subsec:consensus}, we first introduce the network consensus optimization problem, which is followed by the basic version of the DGD method.
Then in Section~\ref{subsec:motivating_example}, we will illustrate an example where DGD with directly compressed information fails to converge, which motivates our subsequent ADC-DGD approach in Section~\ref{sec:adc_dgd}.

\subsection{Consensus Optimization over Networks: A Primer} \label{subsec:consensus}
Consider an undirected connected graph $\mathcal{G}=(\mathcal{N,\mathcal{L}})$,
where $\mathcal{N}$ and $\mathcal{L}$ are the sets of nodes and links, respectively, with $|\mathcal{N}| = N$ and $|\mathcal{L}| = E$. 
Let $x \in \mathbb{R}^{P}$ be some global decision variable to be optimized.
Each node $i$ has a local objective function $f_i(x),$ $i=1,\cdots, N$ (only available to node $i$).
The global objective function is the sum of all local objectives, i.e., $f(x) \triangleq \sum_{i=1}^{N} f_{i}(x)$.
Our goal is to solve the following network-wide optimization problem in a {\em distributed} fashion: 
\begin{align} \label{eqn_general_problem}
\min_{x \in \mathbb{R}^{P}} f(x) = \min_{x \in \mathbb{R}^{P}} \sum_{i=1}^{N} f_{i}(x).
\end{align}
Problem~(\ref{eqn_general_problem}) has a wide range of applications in practice.
For example, consider a wireless sensor network, where each sensor node $k$ distributively collects some local monitored temporal data $\x^k=(x_1^k,\cdots,x_T^k)$ and collaborates to detect the change-point in the global temporal data. 
This problem can be formulated as: $\min \sum_{i=1}^{N} f_{i}(\x^{k})$,
where $f_{i}(\x^{k}) \triangleq -|\sum_{i=1}^{t} x_i^k - \frac{t}{T}\sum_{i=1}^{T} x_i^k|^2$ is the  CUSUM (cumulative sum control chart) statistics. 
Note that Problem~(\ref{eqn_general_problem}) can be equivalently written in the following {\em consensus} form:
\vspace{-.05in}
\begin{align} \label{Eq:problem1}
& \text{Minimize} && \hspace{-.5in} \sum_{i=1}^{n} f_i(x_i) & \\
& \text{subject to} && \hspace{-.5in} x_i=x_j, && \hspace{-.5in} \forall (i,j) \in \mathcal{L}. \nonumber
\vspace{-.05in}
\end{align}
where $x_i\in \mathbb{R}^P$ is the local copy of $x$ at node $i$. 
In Problem (\ref{Eq:problem1}), the constraints enforce that the local copy at each node is equal to those of its neighbors, hence the name consensus.
It is well-known \cite{nedic2009distributed} that Problem (\ref{Eq:problem1}) can be reformulated as:
\vspace{-.03in}
\begin{align} \label{Eq:problem2}
& \text{Minimize} && \hspace{-.5in} \sum_{i=1}^{n} f_i(x_i) & \\
& \text{subject to} && \hspace{-.5in} (\W \otimes \I_{P}) \x = \x, &&\nonumber
\vspace{-.09in}
\end{align}
where $\x \!\triangleq\! [x_1^{\top},\ldots,x_n^{\top}]^{\top} \!\in\! \mathbb{R}^{NP}$, $\I_{P}$ denotes the $P$-dimensional identity matrix,
and the operator $\otimes$ denotes the Kronecker product.
In (\ref{Eq:problem2}), $\mathbf{W} \in \mathbb{R}^{N\times N}$ is referred to as the consensus matrix and satisfies the following properties:
\begin{enumerate}
\item $\mathbf{W}$ is doubly stochastic: $\sum_{i=1}^{N} [\mathbf{W}]_{ij}=\sum_{j=1}^{N} [\mathbf{W}]_{ij}=1$.
\item The sparsity pattern of $\W$ follows the network topology: $[\W]_{ij} > 0$ for $\forall~(i,j)\in \mathcal{L}$ and $[\mathbf{W}]_{ij}=0$ otherwise. 
\item $\mathbf{W}$ is symmetric and hence it has real eigenvalues. 
\end{enumerate}
The doubly stochastic property in 1) ensures that all eigenvalues of $\mathbf{W}$ are in $(-1,1]$ and exactly one eigenvalue is equal to 1.
Hence, it follows from Property 3) that one can sort eigenvalues as $1=\lambda_1(\mathbf{W})\ge\cdots\ge\lambda_N(\mathbf{W})> -1.$ 
Let $\beta \triangleq \max\{|\lambda_2(\mathbf{W})|,|\lambda_N(\mathbf{W})|\}$.
Clearly, we have $\beta<1.$
It is shown in \cite{nedic2009distributed} that $(\W \otimes \I_{P}) \x = \x$ if and only if $x_i = x_j$, $(i,j) \in \mathcal{L}$. 
Therefore, Problems (\ref{Eq:problem1}) and (\ref{Eq:problem2}) are equivalent.


The equivalent network consensus formulation in Problem~(\ref{Eq:problem2}) motivates the design of the decentralized gradient descent (DGD) method as stated in Algorithm~1:


\medskip
\hrule 
\vspace{.03in}
\noindent {\bf Algorithm~1:} Decentralized Gradient Descent (DGD)\cite{nedic2009distributed}.
\vspace{.03in}
\hrule
\vspace{0.1in}
\noindent {\bf Initialization:}
\begin{enumerate} [topsep=1pt, itemsep=-.1ex, leftmargin=.2in]
\item[1.] Let $k=1$. Choose initial values for $x_{i,1}$ and step-size $\alpha_1$. 
\end{enumerate}

\noindent {\bf Main Loop:}
\begin{enumerate} [topsep=1pt, itemsep=-.1ex, leftmargin=.2in]
\item[2.] In the $k$-th iteration, each node sends its local copy to its neighbors. Also, upon reception of all local copies from its neighbors, each node updates its local copy as follows:
   \begin{equation} \label{eqn:orig_dgd}
   x_{i,k+1} = \underbrace{\sum\nolimits_{j \in \mathcal{N}_{i}} [\mathbf{W}]_{ij}x_{j,k}}_{\mathrm{Consensus \,\, step}} - \underbrace{\alpha_k\nabla f_i(x_{i,k})}_{\mathrm{Gradient \,\, step}},
   \end{equation}
where $[\W]_{ij}$ is the entry in the $i$-th row and $j$-th column in $\W$, $x_{i,k}$ and $\alpha_{k}$ represent $x_{i}$'s value and step-size in the $k$-th iteration, respectively, and $\mathcal{N}_{i} \!\triangleq\! \{ j \!\in\! \mathcal{N}: (i,j) \!\in\! \mathcal{L}\}$.
\item[3.] Stop if a desired convergence criterion is met; otherwise, let $k \leftarrow k+1$ and go to Step 2. 
\end{enumerate}
\smallskip
\hrule
\medskip

We can see that the DGD update in (\ref{eqn:orig_dgd}) consists of a consensus step and a local gradient step, which can be easily implemented in a network.
Also, DGD achieves the same $O(1/k)$ convergence rate as in the classical gradient descent method.
However, as mentioned in Section~\ref{sec:intro}, DGD may not work well for high-dimensional consensus problem in low-speed networks.
Hence, we are interested in developing a DGD-type algorithm with compressed information exchanges in this paper.
In what follows, we will first show that DGD fails to converge if compressed information is directly adopted in the consensus step.

%

\subsection{DGD with Directly Compressed Information Exchange Does Not Converge: A Motivating Example} \label{subsec:motivating_example}

We first introduce the notion of unbiased stochastic compression operator, which has been widely used to represent compressions in the literature (see, e.g., \cite{chow2016expander,zhang2012communication,berahas2017balancing,alistarh2017qsgd,wen2017terngrad,wangni2017gradient}). 
\begin{defn}[Unbiased Stochastic Compression Operator]\label{defn:usco}
{\em A stochastic compression operator $C(\cdot)$ is unbiased if it satisfies $C(z)=z+\epsilon_z$, with $\mathbb{E}[\epsilon_z]=0$ and $\mathbb{E}[\epsilon_z^2]\le \sigma^2$}, $\forall~z$. 
\end{defn}

Defintion~\ref{defn:usco} guarantees that the noise caused by the compression has no effect on the mean of the parameter and its variance is bounded. 
Many compressed operators satisfy the above definition. 
The following are some examples:

\begin{ex}[The Low-precision Quantizer \cite{reisizadeh2018quantized}]
Consider the partition for the real line $\mathbb{R}:$ $\{\cdots, a_{-k},\cdots,a_0,\cdots,a_{k},\cdots \},$ with $a_i<a_{i+1}$ $\forall i.$
For ${\z}=(z_1,\cdots,z_p)^{\top} \in \mathbb{R}^P,$ the $k$-th element of $[C({\z})]_k$ is: if $a_{i}\le z_k<a_{i+1},$
\begin{align*}
[C({\z})]_k=
\begin{cases}
a_i, \text{with probability}~ \frac{a_{i+1}-z_k}{a_{i+1}-a_{i}}, \\
a_{i+1},\text{with probability}~ 1- \frac{a_{i+1}-z_k}{a_{i+1}-a_{i}}.
\end{cases}
\end{align*}
\end{ex}

\begin{ex}[The Randomly Rounding Operater \cite{alistarh2017qsgd}] 
For ${\z}=(z_1,\cdots,z_p)^{\top} \in \mathbb{R}^P,$ the $k$-th element of $[C({\z})]_k$ is:
\begin{align*}
[C({\z})]_k=
\begin{cases}
\lfloor z_k \rfloor+1, \text{with probability}~(1-p_k), \\
\lfloor z_k \rfloor,\text{with probability}~p_k.
\end{cases}
\end{align*}
where $\lfloor z \rfloor$ presents the largest integer smaller than $z$ and the probability $p_k=z_k-\lfloor z_k \rfloor$.
\end{ex}

\begin{ex}[The quantization sparsifier] 
Consider the $m$-partition for $\mathbb{B}(0,M):$ $\{a_0=0, a_1,\cdots,a_{m-1}, a_m =M \},$ with $a_i<a_{i+1}$ $\forall i.$
For a bounded vector ${\z}=(z_1,\cdots,z_p)^{\top}$ with $|z_i|\le M,$ the $k$-th element of $[C({\z})]_k$ is: if $a_{i}\le z_k<a_{i+1},$
\begin{align*}
[C({\z})]_k=
\begin{cases}
 \text{sign}(z_k)\cdot a_{i+1}, \text{with probability}~ \frac{z_k}{a_{i+1}}, \\
 0,\text{with probability}~1-\frac{z_k}{a_{i+1}}.
\end{cases}
\end{align*}
\end{ex}


Now, we consider the convergence of DGD with unbiased stochastic compressions.
If local copies are compressed and then directly used in the consensus step in the DGD algorithm, then Eq.~(\ref{eqn:orig_dgd}) in Algorithm~1 can be modified as:
\begin{align} \label{eqn:DGD_diverge}
&x_{i,k+1} =\sum\nolimits_{j \in \mathcal{N}_{i}} [\mathbf{W}]_{ij}C(x_{j,k})-\alpha_k\nabla f_i(x_{i,k}) = \nonumber\\
& \underbrace{ \sum\nolimits_{j=\mathcal{N}_{i}} [\mathbf{W}]_{ij}x_{j,k} \!-\!\alpha_k\nabla f_i(x_{i,k})}_{\mathrm{Exact \,\, DGD}}+\!\!\!
\underbrace{\sum\nolimits_{j \in \mathcal{N}_{i}} [\mathbf{W}]_{ij}\epsilon_{x_{j,k}}}_{\mathrm{Accumulated \,\, noise \,\, term}}\!\!, \!\!\!\!\!\!
\end{align}
which shows that there is a {\em non-vanishing} noise term accumulated over iterations, which prevents the DGD algorithm from converging. 
For example, consider a simple 2-node network with local objectives $f_1(x)= 4(x-2)^2$ and $f_2(x)= 2(x+3)^2$.
The quantized compressed operator\cite{wen2017terngrad} is adopted in DGD.
The simulation results are illustrated in Fig.~\ref{Fig:countercase}, where we can see that DGD fails to converge after 1000 iterations even for such a small-size network consensus problem.
This motivates us to pursue a new algorithmic design in Section~\ref{sec:adc_dgd}.

\begin{figure}[t!]
\centering
\subfigure[$\alpha = 0.001$.]{%
\label{fig:countercase1}%
\includegraphics[width=0.23\textwidth]{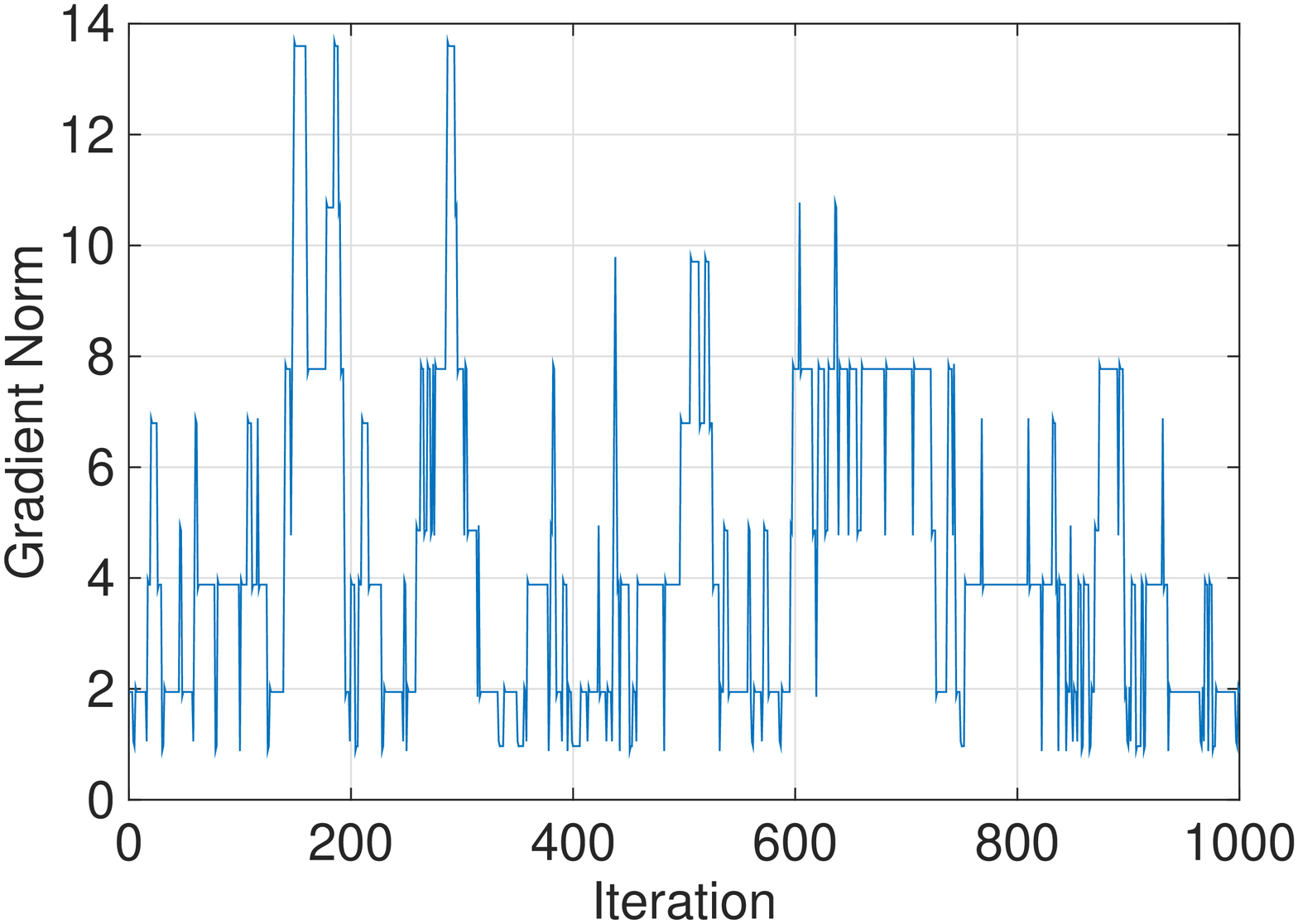}}%
\hspace{8pt}%
\subfigure[$\alpha = 0.001/\sqrt{k}$.]{%
\label{fig:countercase2}%
\includegraphics[width=0.23\textwidth]{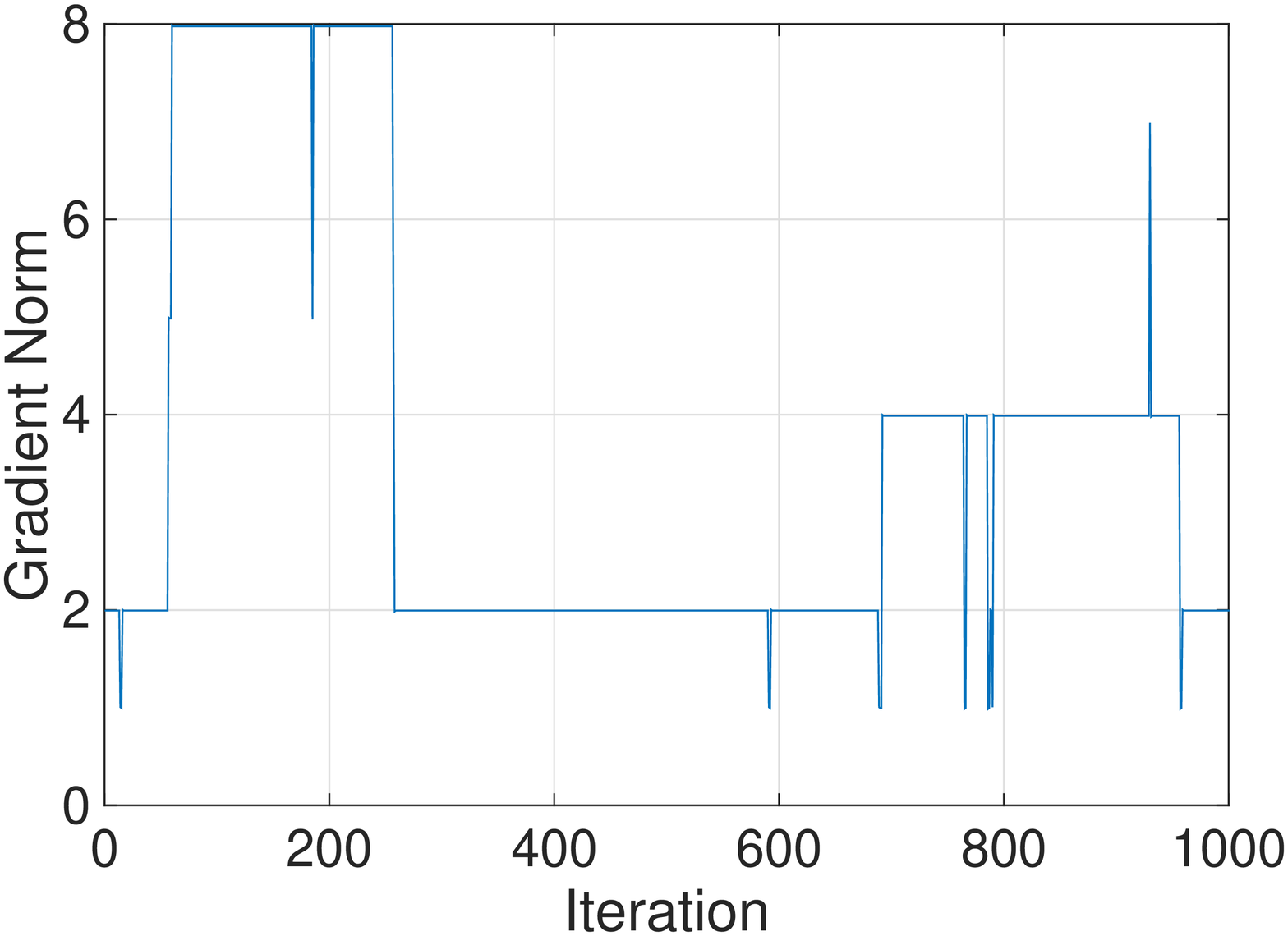}}
\caption{The simulation results for DGD with quantized compression operator for a 2-node network, for which DGD fails to converge after 1000 iterations.}%
\label{Fig:countercase}%
\vspace{-.15in}
\end{figure}

\section{Amplified-Differential Distributed Gradient Descent Method (ADC-DGD)} \label{sec:adc_dgd}

In this Section, we will first introduce our ADC-DGD algorithm in Section~\ref{subsec:the_algorithm}.
Then, we will present the main theoretical results and their intuitions in Section~\ref{subsec:main_results}.
The proofs for the main results are provided in Section~\ref{subsec:conv_analysis}.

\subsection{The ADC-DGD Algorithm} \label{subsec:the_algorithm}

Our ADC-DGD algorithm is stated in Algorithm~2:

%

\medskip
\hrule 
\vspace{.03in}
\noindent {\bf Algorithm~2:} Amplified-Differential Compression DGD.
\vspace{.03in}
\hrule
\vspace{0.1in}
\noindent {\bf Initialization:}
\begin{enumerate} [topsep=1pt, itemsep=-.1ex, leftmargin=.2in]
\item[1.] Let $k\!=\!1$. Let $x_{i,0}\!=\!\tilde{x}_{i,0}\!=\!0$, $\forall i$. 
Choose initial values for step-size $\alpha_1$ and the amplification exponent $\gamma$.
Let $x_{i,1}\!=\!y_{i,1}\!=\!-\alpha_1\nabla f_i(x_{i,0})$, $\forall i$. 
\end{enumerate}

\noindent {\bf Main Loop:}
\begin{enumerate} [topsep=1pt, itemsep=-.1ex, leftmargin=.2in]
\item[2.] In the $k$-th iteration, each node sends the compressed amplified-differential $d_{i,k}=C(k^\gamma y_{i,k})$ to its neighbors. 
Also, upon collecting all neighbors' information, each node estimates neighbors' (imprecise) values: $\tilde{x}_{j,k} \!=\! \tilde{x}_{j,k-1} \!+\! d_{j,k}/k^\gamma$.
Then, each node updates its local value:
   \begin{equation} \label{eqn:adc_dgd}
   x_{i,k+1} = \underbrace{\sum\nolimits_{j \in \mathcal{N}_{i}} [\mathbf{W}]_{ij}\tilde{x}_{j,k}}_{\mathrm{Compressed \,\,consensus}} - \underbrace{\alpha_k\nabla f_i(x_{i,k})}_{\mathrm{gradient \,\, step}}.
   \end{equation}
Each node updates local differential: $y_{i,k+1}\!=\!x_{i,k+1} \!-\! \tilde{x}_{i,k}$.
\item[3.] Stop if a desired convergence criterion is met; otherwise, let $k \leftarrow k+1$ and go to Step 2. 
\end{enumerate}
\smallskip
\hrule
\medskip

Several important remarks on Algorithm~2 are in order:
i) Compared to the original DGD, each node $i$ under ADC-DGD requires additional memory to store the (imprecise) values of its neighbors in the previous iteration: $\{\tilde{x}_{j,k-1}: (i,j)\in \mathcal{L}\}$.
This additional memory allows the neighbors to only transmit the difference between successive iterations $y_{i,k}=x_{i,k}-\tilde{x}_{i,k-1}$ rather than $x_{i,k}$ directly.
Note that this memory requirement is modest in practice since many computer networks are scale-free (i.e., node degree distribution follows a power law and hence most nodes have low degrees);
ii) Each node sends out a compressed version of the amplified-differential $C(k^\gamma y_{i,k})$.
This information will then be de-amplified at the receiving nodes as $d_{i,k}/k^{\gamma}$, which is a noisy version of $y_{i,k}$.
Based on the memory of the previous version, each node obtains their neighbors' values estimation $\tilde{x}_{j,k}$, $j \in \mathcal{N}_{i}$.
Clearly, ADC-DGD is more communication-efficient compared to the original DGD;
iii) Once $\tilde{x}_{j,k}$, $j \in \mathcal{N}_{i}$, are available, the update in (\ref{eqn:adc_dgd}) follows the same structure as in DGD, which also contains a consensus step and a local gradient step.
Therefore, the complexity of ADC-DGD are almost identical to the original DGD, which means that ADC-DGD enjoys the same low-complexity.

%
%

\subsection{Main Convergence Results} \label{subsec:main_results}
Before presenting the convergence results of ADC-DGD, we first state several needed assumptions:
\begin{assum}\label{Assump:Obj}
The local objective functions $f_i(\cdot)$ satisfy:
\begin{list}{\labelitemi}{\leftmargin=1em \itemindent=-0.5em \itemsep=.2em}
\item (Lower boundedness) There exists an optimal $x^*$ with $\|x^*\| \!<\! \infty$ such that $\forall x \!\neq\! x^*,$ $\sum_{i=1}^{N}f_i(x) \!\ge\! \sum_{i=1}^{N}f_i(x^*);$
\item (Lipschitz continuous gradient) there exists a constant $L > 0$ such that $\forall x,y,$ $\|\nabla f_i(x)-\nabla f_i(y)\|\le L\|x-y\|,$ $\forall i$.
\end{list}
\end{assum}

\begin{assum}[Growth rate at infinity]\label{Assump:ObjInfity}
If the domain for $\x$ is unbounded, then there exists a constant $M>0$ such that $$\lim_{\|\x\|\rightarrow \infty}\frac{\|\x\|}{f(\x)}=\lim_{\sum_{i=1}^{n}\|x_i\|\rightarrow \infty}\frac{\sum_{i=1}^{N}\|x_i\|}{\sum_{i=1}^Nf_i(x_i)}\le M,$$
where $f(\x)=\sum_{i=1}^Nf_i(x_i)$ and $\x=(x_1^{\top},\cdots,x_N^{\top})^{\top}.$
 \end{assum}

Assumption~\ref{Assump:Obj} is standard in convergence analysis of gradient descent type algorithms: The first bullet ensures the existence of optimal solution and the second bullet guarantees the smoothness of the local objectives.
Assumption~\ref{Assump:ObjInfity} is a technical result coming out of our proofs and guarantees that, at infinity, the growth rate of the objective function is at least faster than linear. 
We note that Assumption~\ref{Assump:ObjInfity} is a mild assumption, which is evidenced by the following lemma (proof details are relegated to Appendix~\ref{appdx:proof_lemma2}).
\begin{lem}\label{Lemma:2}
Any strictly convex function $f(\cdot)$ satisfying Assumption~\ref{Assump:Obj} also satisfies Assumption~\ref{Assump:ObjInfity}.
\end{lem}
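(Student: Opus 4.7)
The plan is to derive a linear lower bound on $f$ from its strict convexity plus the existence of a global minimizer furnished by Assumption~\ref{Assump:Obj}. Once $f(\x)$ is shown to grow at least linearly in $\|\x\|$ far from the minimum, the ratio bound $\|\x\|/f(\x)\le M$ required by Assumption~\ref{Assump:ObjInfity} follows by direct division.

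The central step is a radial convexity argument. Let $\x^*$ be a global minimizer of $f$ and fix any $r>0$. The sphere $S_r:=\{\x:\|\x-\x^*\|=r\}$ is compact, so the restriction of $f$ to $S_r$ attains its infimum, and I set $m_r:=\min_{\x\in S_r}f(\x)$. Strict convexity gives $f(\x)>f(\x^*)$ for every $\x\ne\x^*$, so the gap $\delta_r:=m_r-f(\x^*)$ is strictly positive.

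To promote this local gap to a global affine lower bound, I would pick any $\y$ with $R:=\|\y-\x^*\|>r$ and let $\z:=\x^*+(r/R)(\y-\x^*)$, which lies on both $S_r$ and the segment from $\x^*$ to $\y$. Convexity along this segment gives $f(\z)\le(r/R)f(\y)+(1-r/R)f(\x^*)$; rearranging and using $f(\z)\ge m_r$ yields $f(\y)\ge f(\x^*)+(\delta_r/r)\,\|\y-\x^*\|$ whenever $\|\y-\x^*\|\ge r$. Consequently $\|\y\|/f(\y)\to r/\delta_r$ as $\|\y\|\to\infty$, and any $M>r/\delta_r$ verifies Assumption~\ref{Assump:ObjInfity}.

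The main obstacle is not the sphere argument itself but rather pinning down a minimizer for the right $f$. Assumption~\ref{Assump:ObjInfity} is phrased for the separable objective $f(\x)=\sum_i f_i(x_i)$ on $\mathbb{R}^{NP}$, whereas Assumption~\ref{Assump:Obj} only supplies a minimizer of the aggregate $\sum_i f_i(x)$ on $\mathbb{R}^P$. Producing a minimizer of the separable objective typically requires an extra argument that combines strict convexity with the Lipschitz-gradient condition to rule out asymptotic critical points of the individual $f_i$'s; I expect this step to take the most care. Once such a minimizer is in hand, the three-step argument above transfers verbatim to $\mathbb{R}^{NP}$ and completes the proof.
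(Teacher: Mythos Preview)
Your radial-sphere argument is correct and takes a different route from the paper's. The paper works componentwise in dimension one: it first reduces to the case where each $f_i$ attains its minimum at $0$ with value $0$, then uses the convexity inequality $f_i(x_0)\le(1/a)f_i(ax_0)$ for $a>1$ to deduce $\|y\|/f_i(y)\le\|x_0\|/f_i(x_0)$ whenever $|y|\ge|x_0|$, and finally sums over $i$ and handles the shift back to general minimizers $x_i^*$ via $g_i(x)=f_i(x+x_i^*)-f_i(x_i^*)$. Your compactness-on-a-sphere argument delivers the same linear lower bound in one stroke on $\mathbb{R}^{NP}$; it is cleaner and avoids the coordinatewise bookkeeping, at the cost of invoking compactness rather than a purely one-dimensional scaling.

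The obstacle you flag is real, and the paper does not close it either: its proof simply writes $x_i^*$ for a minimizer of each $f_i$ without justification. Assumption~\ref{Assump:Obj} only supplies a minimizer of $x\mapsto\sum_i f_i(x)$ on $\mathbb{R}^P$, and strict convexity together with Lipschitz gradients does \emph{not} force individual minimizers to exist. A concrete obstruction is $f_1(x)=\log(1+e^x)$, $f_2(x)=\log(1+e^{-x})$: both are strictly convex with $1/4$-Lipschitz gradients, their diagonal sum on $\mathbb{R}$ is minimized at $0$, yet neither has a minimizer, and sending $x_1\to-\infty$, $x_2\to+\infty$ gives $(\|x_1\|+\|x_2\|)/(f_1(x_1)+f_2(x_2))\to\infty$, so Assumption~\ref{Assump:ObjInfity} fails outright. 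Thus the ``extra argument'' you anticipate cannot succeed in general; the lemma effectively requires the additional hypothesis that each $f_i$ is coercive (equivalently, admits a minimizer), after which both your argument and the paper's go through.
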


\begin{figure}[t!]
\centering
\begin{minipage}[t]{0.24\textwidth}
\centering
\includegraphics[width=4.5cm]{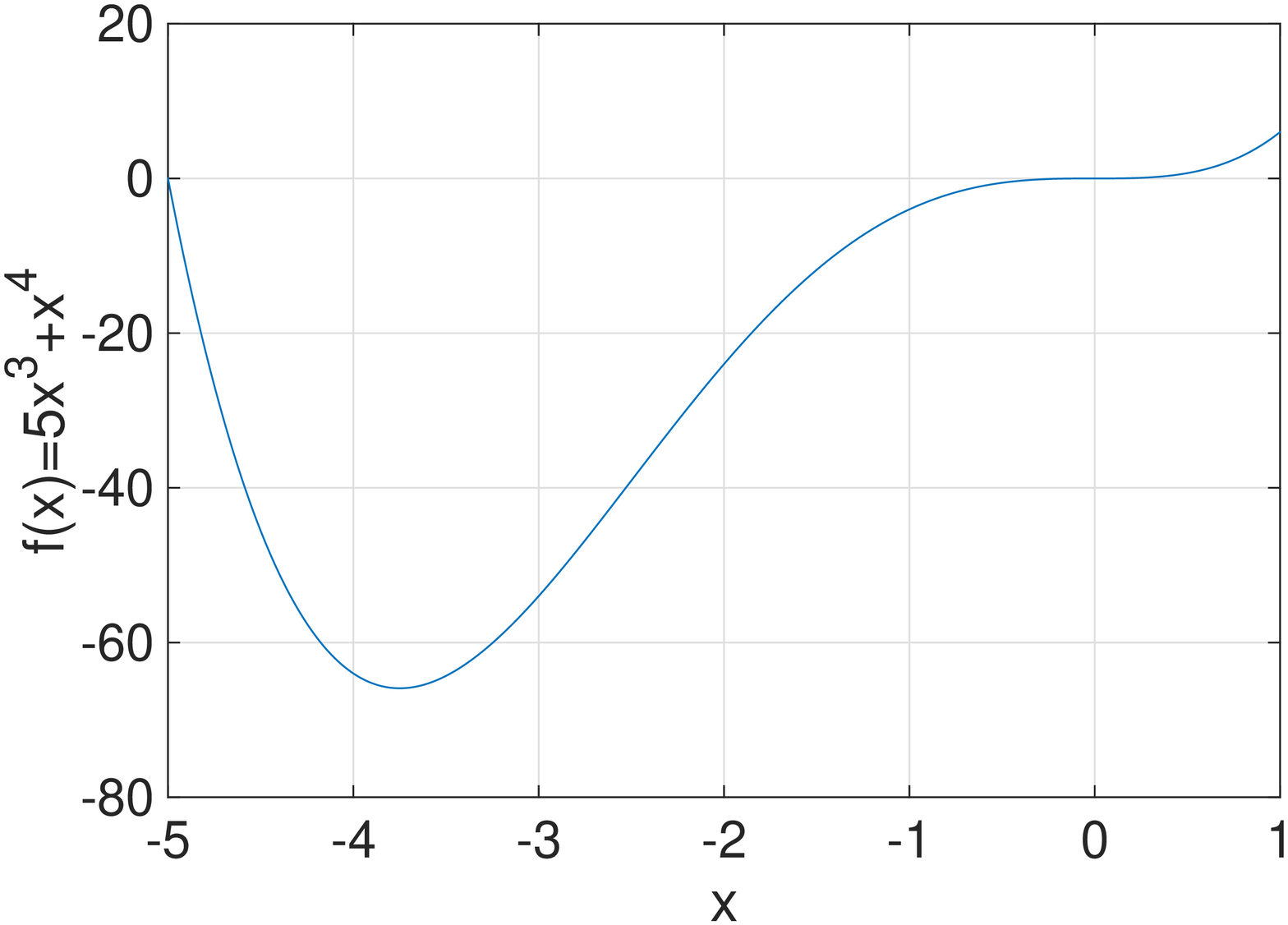}
\end{minipage}
\begin{minipage}[t]{0.24\textwidth}
\centering
\includegraphics[width=4.5cm]{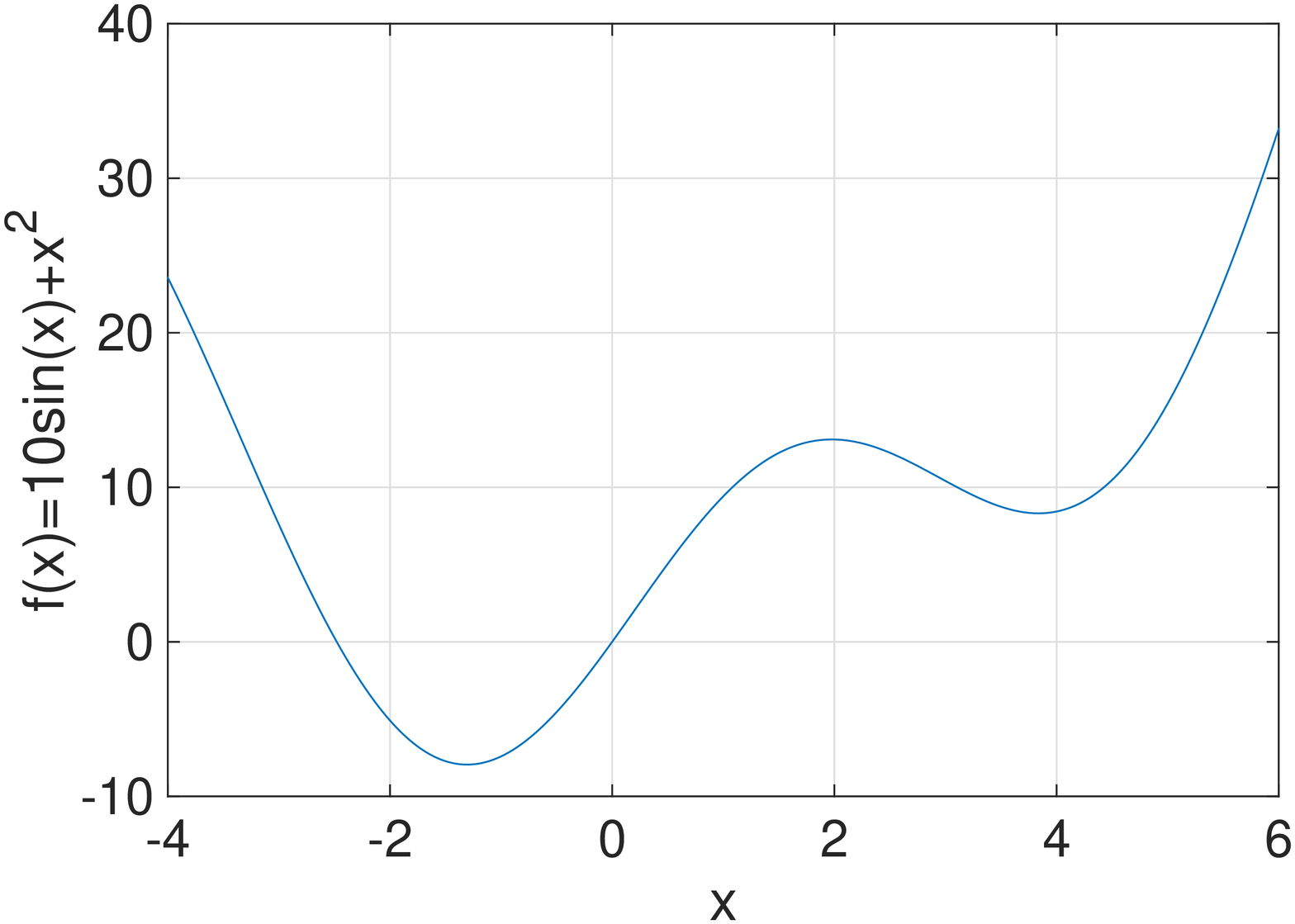}
\end{minipage}
\caption{Examples of non-convex functions satisfying Assumption~\ref{Assump:ObjInfity}.}
\label{fig:NonconvexObj}
\vspace{-.15in}
\end{figure}

In addition to convex objectives, many non-convex functions also satisfy Assumption~\ref{Assump:ObjInfity}, , as shown below and in Fig.~\ref{fig:NonconvexObj}:
\begin{ex}(Non-convex functions satisfying Assumption~\ref{Assump:ObjInfity}):
\begin{list}{\labelitemi}{\leftmargin=1em \itemindent=-0.5em \itemsep=.2em}
	\item $f(x)=x^4+5x^3,$ with $\lim_{x\rightarrow\infty}\|x\|/f(x) = 0$ but $\nabla^2 f(x)=12x^2+30x$ is smaller than $0$ when $x=-1;$
	\item $f(x)=10\sin(x)+x^2,$ with $\lim_{x\rightarrow\infty}\|x\|/f(x) = 0$ but $\nabla^2 f(x)=-10\cos(x)+2$ is smaller than $0$ when $x=0.$
\end{list}
\end{ex}

Our first key result is on the convergence of local variables to the mean vector across nodes:
\begin{thm}\label{thm:conv_meanvec}
Let the mean vector at the $k$-th iteration be defined as $\bar{\x}^k=\1\otimes \bar{x}^k \in\mathbb{R}^{NP},$ with $\bar{x}^k=\frac{1}{N}\sum_{i=1}^{N}x_{i,k}.$ 
Under Assumptions~\ref{Assump:Obj}, if $\mathbb{E}[\|\nabla \f(\x^i)\|]$ is bounded by $D$ and the amplifying exponent is $\gamma,$ then:
\begin{list}{\labelitemi}{\leftmargin=1em \itemindent=-0.5em \itemsep=.2em}
\item For constant step-size $\alpha_i=\alpha$, $\forall i$,  $\mathbb{E}[\|\x^k-\bar{\x}^k\|] \le \frac{\alpha D}{1-\beta}+O(\frac{\sqrt{NP}\sigma}{k^\gamma})$; 
\item For diminishing step-size $\alpha_i=O(\frac{1}{k^\eta})$ with some $\eta>0$, $\mathbb{E}[\|\x^k-\bar{\x}^k\|] = O(\frac{1}{k^{\min(\eta,\gamma)}})$. 
\end{list}
\end{thm}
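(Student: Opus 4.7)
}

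The plan is to first reduce the recursion to a clean form by exposing what the compressed quantity $\tilde{x}_{j,k}$ actually equals, then project the dynamics onto the consensus subspace, and finally solve the resulting scalar recurrence. First I would unroll the definition of $\tilde{x}_{j,k}$. Since $d_{j,k}=C(k^{\gamma}y_{j,k})=k^{\gamma}y_{j,k}+\eta_{j,k}$ with $\mathbb{E}[\eta_{j,k}]=0$ and $\mathbb{E}\|\eta_{j,k}\|^{2}\le P\sigma^{2}$, and since $y_{j,k}=x_{j,k}-\tilde{x}_{j,k-1}$, the update $\tilde{x}_{j,k}=\tilde{x}_{j,k-1}+d_{j,k}/k^{\gamma}$ telescopes into the crucial identity $\tilde{x}_{j,k}=x_{j,k}+\eta_{j,k}/k^{\gamma}$. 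This is the conceptual core of ADC-DGD: the stored state absorbs all earlier compression noise, leaving only a single fresh noise term divided by the amplifier $k^{\gamma}$.

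Next I would stack the nodes to get the matrix form
\begin{equation*}
\x^{k+1}=(\W\otimes\I_{P})\x^{k}+(\W\otimes\I_{P})\tfrac{\eta^{k}}{k^{\gamma}}-\alpha_{k}\nabla\f(\x^{k}),
\end{equation*}
and then introduce the consensus projector $\W_{\infty}=\frac{1}{N}\1\1^{\top}$, which satisfies $\W_{\infty}\W=\W\W_{\infty}=\W_{\infty}$ by double stochasticity. Applying $(\I-\W_{\infty})\otimes\I_{P}$ to both sides, noting that $(\W-\W_{\infty})\bar{\x}^{k}=0$, yields
\begin{equation*}
\x^{k+1}-\bar{\x}^{k+1}=[(\W-\W_{\infty})\!\otimes\!\I_{P}](\x^{k}-\bar{\x}^{k})+[(\W-\W_{\infty})\!\otimes\!\I_{P}]\tfrac{\eta^{k}}{k^{\gamma}}-\alpha_{k}[(\I-\W_{\infty})\!\otimes\!\I_{P}]\nabla\f(\x^{k}).
\end{equation*}
The key spectral fact is $\|\W-\W_{\infty}\|=\beta<1$ (the operator norm on the orthogonal complement of $\1$ equals the second largest eigenvalue modulus). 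Taking norms, using $\mathbb{E}\|\eta^{k}\|\le\sqrt{NP}\sigma$ by Jensen, and using $\|(\I-\W_{\infty})\otimes\I_{P}\|=1$ together with the gradient bound $D$, I obtain the scalar recurrence
\begin{equation*}
a_{k+1}\le\beta a_{k}+\tfrac{\beta\sqrt{NP}\sigma}{k^{\gamma}}+\alpha_{k}D,\qquad a_{k}\triangleq\mathbb{E}\|\x^{k}-\bar{\x}^{k}\|.
\end{equation*}

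Finally I would unroll this recurrence. The gradient term contributes $D\sum_{i=0}^{k-1}\beta^{i}\alpha_{k-i}$; with constant $\alpha_{k}\equiv\alpha$ this is bounded by $\alpha D/(1-\beta)$, while with $\alpha_{k}=O(k^{-\eta})$ a standard convolution estimate gives $O(k^{-\eta})$. The noise term contributes $\sqrt{NP}\sigma\sum_{i=1}^{k}\beta^{k-i+1}/i^{\gamma}$, which I would bound by the usual split-the-sum trick: for $i<k/2$ the geometric factor $\beta^{k-i+1}$ is exponentially small, and for $i\ge k/2$ the polynomial factor $1/i^{\gamma}$ is at most $(2/k)^{\gamma}$, yielding $O(\sqrt{NP}\sigma/k^{\gamma})$. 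Combining the two contributions gives exactly the two statements of the theorem. The main obstacle I anticipate is bookkeeping around the noise term in the diminishing step-size case, specifically verifying that the $\min(\eta,\gamma)$ rate emerges cleanly from the convolution of a geometric sequence with two polynomially decaying sequences; the split-the-sum estimate above handles this, but one must be careful to track the correct constants and to use that the initial transient $\beta^{k}a_{1}$ decays faster than any polynomial and is therefore absorbed.
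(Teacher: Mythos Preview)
Your proof is correct and follows the same overall architecture as the paper: expose $\tilde{x}_{j,k}=x_{j,k}+\eta_{j,k}/k^{\gamma}$, write the stacked recursion, project onto the disagreement subspace via $\I-\W_{\infty}$, and reduce to bounding a convolution of a geometric sequence with a polynomially decaying one. The paper unrolls the full recursion for $\x^{k}$ first (its Eq.~(\ref{Eq:recursive})) and then applies the projection $\I-\tfrac{1}{N}\1\1^{\top}$, whereas you project first and then unroll the scalar inequality for $a_{k}$; the two orderings are equivalent and yield the same sums $\sum_{i}\beta^{k-i}\alpha_{i}$ and $\sum_{i}\beta^{k-i}/i^{\gamma}$.

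The one substantive difference is how the convolution $h_{k}=\sum_{i=1}^{k}\beta^{k-i}/i^{\gamma}$ is controlled. The paper isolates this as a separate lemma (Lemma~\ref{Lemma:convolution}) and proves $h_{k}=O(k^{-\gamma})$ by passing to an integral and evaluating it via the upper incomplete Gamma function in the complex plane. Your split-the-sum argument (cut at $k/2$, use exponential decay on the first half and the pointwise bound $i^{-\gamma}\le(2/k)^{\gamma}$ on the second) gives the same $O(k^{-\gamma})$ conclusion by entirely elementary means. Your route is shorter and avoids special functions; the paper's route gives the sharper asymptotic $h_{k}\sim \tfrac{1}{-\log\beta}\,k^{-\gamma}$ with an explicit leading constant, though that precision is not used downstream.
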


\begin{rem}
{\em Theorem~\ref{thm:conv_meanvec} says that the local copies will converge to the mean vector asymptotically with a diminishing step-size, or stay within a bounded error ball of the mean vector if a constant step-size is adopted.}
\end{rem}

Our second key convergence result is on the convergence rate of ADC-DGD under constant step-sizes:
\begin{thm}[Constant Step-Size]\label{thm:const_step}
Let the step-size be constant, i.e.,$\alpha_{k} = \alpha$, $\forall k$, with $\alpha<\frac{1+\lambda_N(\mathbf{W})}{L}$.
Under Assumptions~\ref{Assump:Obj}-\ref{Assump:ObjInfity}, if the amplified exponent $\gamma >\frac{1}{2}$, then it holds that:
\begin{multline}\label{Eq:Theorem3}
\min_{k} \Big\{\mathbb{E}\Big[ \Big\|\frac{1}{N}\sum_{i=1}^{N} \nabla f_i(\bar{x}^{k}) \Big\|^2 \Big]\Big\} \le \\\frac{C_1\alpha^2}{N(1-\beta)^2} 
+ \frac{C_2}{\alpha k}+O\Big(\frac{2L\sqrt{P}\sigma}{\sqrt{N}k^\gamma} \Big),
\end{multline}
where $C_1 \triangleq L^4(B+\|\x^*\|)^2$ and $C_2 \triangleq 2[\frac{1}{N}\sum_{i=1}^{N} f_i(0) - \frac{1}{N}\sum_{i=1}^{N} f_i(x^*)]$ are two constants.
\end{thm}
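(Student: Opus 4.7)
The plan is to adapt the classical descent-lemma argument for $L$-smooth nonconvex optimization to the mean iterate $\bar x^k$, while handling the two distributed-specific errors via Theorem~\ref{thm:conv_meanvec}: the consensus gap $\|\x^k-\bar{\x}^k\|$ (with steady-state $\alpha D/(1-\beta)$ and transient $O(1/k^\gamma)$) and the zero-mean de-amplified compression noise whose per-step variance is $O(\sigma^2/k^{2\gamma})$, which is summable precisely when $\gamma>1/2$.

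The first step I would carry out is to average the update~(\ref{eqn:adc_dgd}) over $i$. Because $\W$ is doubly stochastic, the consensus term collapses to $\bar x^k$ plus aggregated compression error, yielding
\begin{equation*}
\bar x^{k+1}=\bar x^k-\frac{\alpha}{N}\sum_{i=1}^N\nabla f_i(x_{i,k})+\frac{1}{Nk^\gamma}\sum_{i=1}^N\xi_{i,k},
\end{equation*}
where $\{\xi_{i,k}\}$ are zero-mean, variance-at-most-$\sigma^2$ compression errors entering at iteration $k$. I would then apply the descent lemma to $f=\sum_i f_i$ (which is $NL$-smooth), decompose $\sum_i\nabla f_i(x_{i,k})=\nabla f(\bar x^k)+\sum_i[\nabla f_i(x_{i,k})-\nabla f_i(\bar x^k)]$, and bound the second group in norm by $L\sqrt{N}\,\|\x^k-\bar{\x}^k\|$ via Lipschitz gradients and Cauchy--Schwarz. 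Taking expectations annihilates the linear noise contribution, and Young's inequality applied to the cross term, combined with the condition $\alpha<(1+\lambda_N(\W))/L$ used to absorb the quadratic $\|\bar x^{k+1}-\bar x^k\|^2$ piece into the negative $\|\nabla f(\bar x^k)\|^2$ term, delivers a one-step descent of the form
\begin{equation*}
\mathbb{E}[f(\bar x^{k+1})]\le\mathbb{E}[f(\bar x^k)]-\frac{\alpha}{2N}\mathbb{E}\|\nabla f(\bar x^k)\|^2+c_1\alpha L^2\,\mathbb{E}\|\x^k-\bar{\x}^k\|^2+\frac{c_2\sigma^2}{k^{2\gamma}}.
\end{equation*}

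Next I would telescope over $k=1,\dots,K$ and divide by $K$. The $f$-telescope collapses to $f(\bar x^1)-f^{\star}$, which, by the lower boundedness in Assumption~\ref{Assump:Obj} and the zero initialization, produces the $C_2/(\alpha k)$ term. Substituting the Theorem~\ref{thm:conv_meanvec} bound into the consensus-error sum contributes the $C_1\alpha^2/(N(1-\beta)^2)$ steady-state floor plus an $O(1/k^\gamma)$ tail; the direct compression-noise sum $\sum_{k=1}^K k^{-2\gamma}$ is finite thanks to $\gamma>1/2$ and, after dividing by $K$, folds into the $O(\sqrt{P}\sigma/(\sqrt N\,k^\gamma))$ remainder. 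Since $\min_k(\cdot)\le\text{avg}_k(\cdot)$, the bound~(\ref{Eq:Theorem3}) follows.

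The main obstacle I anticipate is noise bookkeeping: the amplification factor $k^\gamma$ causes each step's de-amplified compression error $\xi_{i,k}/k^\gamma$ to propagate through the consensus step into \emph{every} future iterate, so verifying that its cumulative contribution to $\|\x^k-\bar{\x}^k\|$ stays $O(1/k^\gamma)$ (and not worse) is exactly the inductive argument packaged in Theorem~\ref{thm:conv_meanvec}; the threshold $\gamma>1/2$ enters both there (for variance summability) and here (for the $\sum k^{-2\gamma}$ term). A secondary technical point is justifying the uniform gradient bound $\mathbb{E}\|\nabla f_i(x_{i,k})\|\le D$ used by Theorem~\ref{thm:conv_meanvec}: Assumption~\ref{Assump:ObjInfity} prevents iterates from escaping to infinity along the descent trajectory, and $L$-smoothness then transfers bounded iterates into bounded gradients.
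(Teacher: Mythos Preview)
Your proposal is correct and follows essentially the same route as the paper: the one-step descent inequality you derive for $\bar x^k$ is precisely Lemma~\ref{Lemma:5}, and your telescoping plus substitution of Theorem~\ref{thm:conv_meanvec} mirrors the paper's proof in Appendix~\ref{appdx:thm_const_step}. One small misattribution: the step-size condition $\alpha<(1+\lambda_N(\W))/L$ is not actually needed to absorb the quadratic term in the descent step (in the paper's Lemma~\ref{Lemma:5} that absorption goes through for any $\alpha_k$ via the identity $\pm 2\langle a,b\rangle\le\|a\|^2+\|b\|^2$); rather, it is used upstream in Theorem~\ref{thm:bnd_grad} to establish the uniform bound $\mathbb{E}\|\nabla\f(\x^k)\|\le L(B+\|\x^*\|)$, which is exactly the ``secondary technical point'' you flag at the end.
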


\begin{rem}\label{Cor:1}
{\em Under the same conditions of Theorem \ref{thm:const_step}, we immediately have that Algorithm~2 has an ergodic convergence rate $O\big(\frac{1}{k^{\min{(1,\gamma)}}} \big)$ until $\min_{k} \{\mathbb{E}[\|\frac{1}{N}\sum_{i=1}^{N}\nabla f_i(\bar{x}^{k})\|^2\}$ reaching the error ball $O((\frac{\alpha}{1-\beta})^2)$ and the fastest rate is $O(\frac{1}{k}).$}
\end{rem}

Our third key convergence result is concerned with the convergence rate of ADC-DGD under diminishing step-sizes:

\begin{thm}[Diminishing Step-Sizes] \label{thm:diminish_step}
Under Assumptions \ref{Assump:Obj}-\ref{Assump:ObjInfity}, if the local objectives have bounded graidents, i.e. there exists a positive constant $D$ such that $\|\nabla f_i(x)\|\le D,$ $\forall x,$  and $\gamma>\frac{1}{2}$, $\eta\ge \frac{1}{2},$ then with diminishing step-size $\alpha_k=O(\frac{1}{k^\eta}),$ it holds that $\|\frac{1}{N}\sum_{i=1}^{N} \nabla f_i(\bar{x}^{k})\|^2=o(\frac{1}{k^{1-\eta}})$  almost surely.
\end{thm}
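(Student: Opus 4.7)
The plan is to derive a stochastic descent inequality for $\bar{f}(\bar{x}^k)$ with $\bar{f}:=\tfrac{1}{N}\sum_i f_i$, apply the Robbins--Siegmund supermartingale theorem to obtain the almost-sure summability $\sum_k \alpha_k\|G_k\|^2<\infty$ with $G_k:=\nabla\bar{f}(\bar{x}^k)$, and finally promote this summability to the pointwise rate $\|G_k\|^2=o(1/k^{1-\eta})$ using $L$-smoothness as a regularity tool.

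First, I would derive the averaged recursion. Unrolling $\tilde{x}_{j,k}=\tilde{x}_{j,k-1}+d_{j,k}/k^{\gamma}$ with $y_{j,k}=x_{j,k}-\tilde{x}_{j,k-1}$ and $d_{j,k}=C(k^{\gamma}y_{j,k})=k^{\gamma}y_{j,k}+\epsilon_{j,k}$ yields the identity $\tilde{x}_{j,k}=x_{j,k}+\epsilon_{j,k}/k^{\gamma}$. Averaging~(\ref{eqn:adc_dgd}) over $i$ and using double-stochasticity of $\W$ then gives
$$\bar{x}^{k+1}=\bar{x}^{k}-\frac{\alpha_k}{N}\sum_{i=1}^{N}\nabla f_i(x_{i,k})+\xi_k,\qquad \xi_k:=\frac{1}{Nk^{\gamma}}\sum_{j=1}^{N}\epsilon_{j,k},$$
where $\xi_k$ is a conditionally zero-mean martingale difference with $\mathbb{E}[\|\xi_k\|^2\mid\mathcal{F}_k]=O(1/k^{2\gamma})$. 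Decomposing $\tfrac{1}{N}\sum_i\nabla f_i(x_{i,k})=G_k+E_k$ with $\|E_k\|\le (L/N)\|\x^k-\bar{\x}^k\|$ and applying the $L$-smooth descent lemma to $\bar{f}$ together with Young's inequality produces the conditional descent
$$\mathbb{E}[\bar{f}(\bar{x}^{k+1})\mid\mathcal{F}_k]\le\bar{f}(\bar{x}^{k})-c\,\alpha_k\|G_k\|^2+A_k,$$
with $A_k=O(\alpha_k\|\x^k-\bar{\x}^k\|^2)+O(\alpha_k^2 D^2)+O(1/k^{2\gamma})$.

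Second, I would verify $\sum_k \mathbb{E}[A_k]<\infty$. A second-moment strengthening of Theorem~\ref{thm:conv_meanvec} (obtained by squaring the contraction recursion used in its proof and invoking $\|\nabla f_i\|\le D$) yields $\mathbb{E}[\|\x^k-\bar{\x}^k\|^2]=O(\alpha_k^2)+O(1/k^{2\gamma})$. Since $\eta\ge\tfrac{1}{2}$ and $\gamma>\tfrac{1}{2}$, every resulting series converges. Robbins--Siegmund then gives almost surely (i) convergence of $\bar{f}(\bar{x}^k)$, and (ii) $\sum_k \alpha_k\|G_k\|^2<\infty$. Assumption~\ref{Assump:ObjInfity} converts (i) into almost-sure boundedness of $\bar{x}^k$, so $\|G_k\|$ is uniformly bounded along the trajectory.

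The delicate step is promoting $\sum_k\|G_k\|^2/k^{\eta}<\infty$ to the pointwise $\|G_k\|^2=o(1/k^{1-\eta})$. Using $L$-smoothness of $\nabla\bar{f}$ and $\|\bar{x}^{k+1}-\bar{x}^k\|\le D\alpha_k+\|\xi_k\|$, one obtains the one-step regularity $\bigl|\|G_{k+1}\|^2-\|G_k\|^2\bigr|\le 2LD(D\alpha_k+\|\xi_k\|)$. Arguing by contradiction from $\|G_{k_j}\|^2\ge\delta/k_j^{1-\eta}$ for some subsequence, regularity forces $\|G_\ell\|^2\ge\delta/(2k_j^{1-\eta})$ throughout a window of length $W_j\asymp\min(k_j^{2\eta-1},k_j^{\eta+\gamma-1})\ge 1$ (using $\eta\ge\tfrac12$, $\gamma>\tfrac12$). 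Summing $\alpha_\ell\|G_\ell\|^2$ over these disjoint windows contradicts the a.s.\ vanishing of the tail $\sum_{\ell\ge k}\alpha_\ell\|G_\ell\|^2\to 0$, yielding the desired rate. The main obstacle is making this window/regularity argument fully quantitative in the presence of stochastic $\xi_k$: this is handled by the a.s.\ summability $\sum_k\|\xi_k\|^2<\infty$ (a Borel--Cantelli consequence of $\gamma>\tfrac{1}{2}$) combined with Kronecker-type bounds to control the cumulative noise contribution over each window.
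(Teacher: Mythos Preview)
Your descent-plus-supermartingale skeleton is exactly the paper's: Lemma~\ref{Lemma:5} supplies the conditional descent inequality for $\bar f(\bar x^{k})$, Theorem~\ref{thm:conv_meanvec} is plugged in for the consensus error, and the Robbins--Siegmund theorem (stated in the paper as Theorem~\ref{Theorem:supermartingale}) yields $\sum_k \alpha_k\|G_k\|^2<\infty$ almost surely. The only embellishments on your side are the second-moment form of the consensus bound and the appeal to Assumption~\ref{Assump:ObjInfity} for trajectory boundedness; neither appears in the paper's proof of this theorem.

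The real divergence is your final step. The paper does \emph{not} use any smoothness-based regularity or window argument: it simply invokes its Lemma~\ref{lemma:series} (``a summable positive series is $o(1/k)$'') on $a_k=\tfrac{\alpha_k}{2}\|G_k\|^2$, obtaining $\alpha_k\|G_k\|^2=o(1/k)$ and hence $\|G_k\|^2=o(1/(k\alpha_k))=o(k^{-(1-\eta)})$ in one line. Your window argument is more elaborate but, as written, does not close: the contribution of a single window is
\[
W_j\cdot\alpha_{k_j}\cdot\frac{\delta}{2k_j^{1-\eta}}\;\asymp\;k_j^{2\eta-1}\cdot k_j^{-\eta}\cdot k_j^{-(1-\eta)}\;=\;k_j^{2\eta-2},
\]
which tends to $0$ for every $\eta<1$; summing over a sparse subsequence of such windows therefore cannot contradict the vanishing of the tail $\sum_{\ell\ge k}\alpha_\ell\|G_\ell\|^2$. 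In other words, your argument recovers $\liminf_k k^{1-\eta}\|G_k\|^2=0$ but not the $\limsup$ statement you need. Your instinct that this promotion step is delicate is well founded---Lemma~\ref{lemma:series} as stated, without a monotonicity hypothesis on $a_k$, is not true in general---but the paper's published proof sidesteps the issue by appealing to that lemma directly rather than via a regularity argument.
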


\begin{rem}
{\em In Theorem \ref{thm:diminish_step}, the exponent for the diminishing rate of step-size is lower bounded ($\eta\ge \frac{1}{2})$. Thus, the best convergence rate for this algorithm is $o(1/\sqrt{k}),$ which is faster than the rate $O(\log(k)/\sqrt{k})$ in \cite{tang2018decentralization}.
We also note that our convergence result is in ``Small-O'', which is stronger than conventional ``Big-O'' convergence results. 
}
\end{rem}

\begin{rem}[Intuition and Design Rationale of ADC-DGD]
{\em To understand why ADC-DGD converges, a closer look at (\ref{eqn:adc_dgd}) in Algorithm~2 reveals that:
\begin{align}\label{Eq:ActNoise}
&\tilde{x}_{j,k} \!=\! \tilde{x}_{j,k-1} \!+\! C(k^\gamma y_{j,k})/k^\gamma \!=\! \tilde{x}_{j,k-1}+(k^\gamma y_{j,k} \!+\! \epsilon_{k^\gamma y_{j,k}})/k^\gamma \nonumber\\
&=\tilde{x}_{j,k-1}+ y_{j,k}+(\epsilon_{k^\gamma y_{j,k}})/k^\gamma =x_{j,k}+(\epsilon_{k^\gamma y_{j,k}})/k^\gamma.
\end{align} 
Thanks to the properties of the unbiased stochastic operator (cf. Definition~\ref{defn:usco}), the noise term in the last step of (\ref{Eq:ActNoise}) has zero mean and a {\em vanishing} variance $\frac{\sigma^2}{k^{2\gamma}}$ as $k$ gets large.
This is in contrast to the accumulated non-vanishing noise term in DGD (cf. Eq.~(\ref{eqn:DGD_diverge})).
Eq.~(\ref{Eq:ActNoise}) also shows that our ADC-DGD algorithm can be interpreted as a {\em variance reduction} method.
Indeed, our proofs in Section~\ref{subsec:conv_analysis} are based on these intuitions.
}
\end{rem}

\subsection{Proofs of the Main Theorems} \label{subsec:conv_analysis}

Due to space limitation, in this subsection, we outline the key steps of the proofs of Theorems~\ref{thm:conv_meanvec}--\ref{thm:diminish_step}.
We relegate proof details to appendices.
Some appendices provide proof sketches due to the lengths of the proofs.

\smallskip
{\em Step 1): Introducing a Lyapunov Function.}
Consider the following Lyapunov function, which is also used in\cite{yuan2016convergence,berahas2017balancing}:
\begin{equation}\label{Eq:Lyap}
L_{\alpha_k}(\x)=\frac{1}{2}\x^{\top}(\I-\Z)\x+\alpha_k \1^{\top} \f(\x),
\end{equation}
where $\x=[x_1^{\top},\cdots,x_N^{\top}]^{\top},$ $\Z \triangleq \W \otimes \I_{P},$ and $\f(\x) \triangleq [f_1(x_1),\cdots,f_N(x_N)]^{\top}$ so that $\1^{\top} \f(\x)=\sum_{i=1}^{N}f_i(x_i)$.  
The following lemma is from\cite{yuan2016convergence}, which says that the Lyapunov function $L_a(\x)$ has Lipschitz-continuous gradient.

\begin{lem}\label{Lemma:1}
Under Assumption \ref{Assump:Obj}, the Lyapunov function
$L_{\alpha}(\x)=\frac{1}{2}\x^T(\I-\Z)\x+\alpha \1^T \f(\x)$ has $(1-\lambda_N(\mathbf{W})+\alpha L)$-Lipschitz gradient, i.e. $\|\nabla L_{\alpha}(\x) - \nabla L_{\alpha}(\mathbf{y})\|\le (1-\lambda_N(\mathbf{W})+\alpha L)\|\x-\mathbf{y}\|,$ $\forall \x,\mathbf{y} \in \mathbb{R}^{np}.$
\end{lem}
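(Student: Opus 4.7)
The plan is to split $\nabla L_\alpha$ into its two natural pieces and bound each separately, then combine by the triangle inequality. First I would compute the gradient explicitly as $\nabla L_{\alpha}(\x) = (\I - \Z)\x + \alpha \, \nabla \f(\x)$, where $\nabla \f(\x) = [\nabla f_1(x_1)^\top, \ldots, \nabla f_N(x_N)^\top]^\top$. This uses the symmetry of $\I - \Z$, which follows from the symmetry assumption on $\W$ together with the fact that $\Z = \W \otimes \I_P$ inherits symmetry from $\W$. Thus
\begin{equation*}
\nabla L_\alpha(\x) - \nabla L_\alpha(\y) = (\I - \Z)(\x - \y) + \alpha \bigl(\nabla \f(\x) - \nabla \f(\y)\bigr),
\end{equation*}
and the triangle inequality reduces the problem to bounding the two terms on the right.

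For the first (consensus) term, I would use the eigenstructure of $\Z$. By the Kronecker product, the eigenvalues of $\Z = \W \otimes \I_P$ are exactly the eigenvalues of $\W$, each with multiplicity $P$, so they lie in $[\lambda_N(\W), 1]$. Consequently the eigenvalues of $\I - \Z$ lie in $[0, 1 - \lambda_N(\W)]$, and since $\I - \Z$ is symmetric its operator norm equals its spectral radius, giving $\|(\I - \Z)(\x - \y)\| \le (1 - \lambda_N(\W))\|\x - \y\|$.

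For the second (gradient) term, I would exploit the block-diagonal separability of $\nabla \f$: each block depends only on its own $x_i$, so
\begin{equation*}
\|\nabla \f(\x) - \nabla \f(\y)\|^2 = \sum_{i=1}^{N} \|\nabla f_i(x_i) - \nabla f_i(y_i)\|^2 \le L^2 \sum_{i=1}^{N} \|x_i - y_i\|^2 = L^2 \|\x - \y\|^2,
\end{equation*}
where the inequality invokes the Lipschitz-gradient bullet of Assumption~\ref{Assump:Obj}. Taking square roots gives the $L$-Lipschitz bound on $\nabla \f$.

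Combining the two bounds via the triangle inequality yields the desired Lipschitz constant $1 - \lambda_N(\W) + \alpha L$. There is no real obstacle here; the only point that requires a moment's care is justifying the operator-norm identity for $\I - \Z$ (symmetry plus the Kronecker-product eigenvalue identity), which is why I would state those facts explicitly rather than treating them as immediate. Because the lemma is quoted from prior work \cite{yuan2016convergence}, the proof is expected to be routine, and this two-term decomposition is the cleanest route.
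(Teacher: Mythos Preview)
Your proof is correct and is the standard argument. Note, however, that the paper does not actually supply its own proof of this lemma: it is stated with the remark ``The following lemma is from~\cite{yuan2016convergence}'' and then used without further justification. Your two-term decomposition via the triangle inequality, together with the spectral bound on $\I-\Z$ and the blockwise Lipschitz bound on $\nabla\f$, is exactly the routine verification one would expect, and nothing more is needed.
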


Note that, using the notation $\Z$, we can compactly rewrite the updating step (\ref{eqn:adc_dgd}) in Algorithm~2 as follows:
\begin{align}\label{Eq:updating}
\x^{k+1}&=\Z(\tilde{\x}^k+\mathbf{d}^k/k^\gamma)-\alpha_k \nabla \f(\x^k)\notag\\ 
&=\Z\x^k+\Z\epsilon^k/k^\gamma-\alpha_k \nabla \f(\x^k)\notag\\
&=\x^k-[(\I-\Z)\x^k+\alpha_k \nabla \f(\x^k)]+\Z\epsilon^k/k^\gamma \notag\\
&=\x^k- \nabla L_{\alpha_k}(\x^k)+\Z\epsilon^k/k^\gamma,
\end{align}
where 
$\nabla \f(\x^k)=[\nabla f_1(x_{1,k})^{\top},\cdots,\nabla f_N(x_{N,k})^{\top}]^{\top},$ $\x^k=(x_{1,k}^{\top},\cdots,x_{N,k}^{\top})^{\top}$ is the parameter in the $k$-th iteration,
$\tilde{\x}^k=[\tilde{x}_{1,k}^{\top},\cdots,\tilde{x}_{N,k}^{\top}]^{\top}$ is the vector of imprecise parameters, and 
$\mathbf{d}^k \triangleq [d_{1,k}^{\top},\cdots,d_{N,k}^{\top}]^{\top}$ and $\epsilon^k \triangleq [\epsilon_{k^\gamma y_{1,k}}^{\top},\cdots,\epsilon_{k^\gamma y_{N,k}}^{\top}]^{\top}$. 
It can be seen that Eq.~(\ref{Eq:updating}) is one-step stochastic gradient descent for $L_{\alpha_k}(\x)$ and the noise term $\Z\epsilon^k/k^\gamma$ has zero mean and variance with diminishing bound $NP\sigma^2/k^{2\gamma}$, i.e.,
\begin{align}
&\mathbb{E}[\Z\epsilon^k/k^\gamma]=\Z\mathbb{E}[\epsilon^k]/k^\gamma=0,\\
&\mathbb{E}[\|\Z\epsilon^k/k^\gamma\|^2]\le\|\Z\|^2\mathbb{E}[\|\epsilon^k\|^2]/k^{2\gamma} \stackrel{(a)}{\le} NP\sigma^2/k^{2\gamma},\label{Eq:Var}
\end{align}
where $(a)$ follows from the fact that the eigenvalues of $\Z$ are in $(-1,1]$ and $\epsilon^k\in \mathbb{R}^{NP}.$

{\em Step 2) Convergence of the Objective Value.}
Note from (\ref{Eq:ActNoise}) that the noise caused by compression is similar to the noise in the standard stochastic gradient descent method (SGD). 
Hence, we can apply similar analysis techniques from SGD on the iterations of ADC-DGD to obtain the following results:

\begin{thm}[Bounded Gradient]\label{thm:bnd_grad}
Under Assumptions \ref{Assump:Obj}-\ref{Assump:ObjInfity}, if the step-size $\alpha<\frac{1+\lambda_N(\mathbf{W})}{L}$ and the amplified exponent $\gamma > \frac{1}{2}$ in Algorithm~2, 
then there exists a constant $B>0$ such that $\mathbb{E}[\|\x^k\|]\le B$ and $\mathbb{E}[\|\nabla \f(\x^k)\|]\le L(B+\|\x^*\|),$ where $\x^{*}=\1\otimes x^*\in \mathbb{R}^{NP}.$ Moreover, $\mathbb{E}[\| \nabla L_{\alpha}(\x^k)\|^2] = o(1/k).$
\end{thm}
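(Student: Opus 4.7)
The plan is to interpret the compact update \eqref{Eq:updating} as one step of stochastic gradient descent on the Lyapunov function $L_{\alpha}$, whose gradient is $L'$-Lipschitz with $L':=1-\lambda_N(\W)+\alpha L$ by Lemma~\ref{Lemma:1}. The compression noise $\Z\epsilon^k/k^\gamma$ has zero conditional mean, and its second moment is bounded by $NP\sigma^2/k^{2\gamma}$ per \eqref{Eq:Var}; this sequence is summable exactly because $\gamma>\tfrac{1}{2}$. The step-size hypothesis $\alpha<(1+\lambda_N(\W))/L$ is chosen precisely so that $\delta:=1-L'/2>0$.

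My first step would be to apply the descent lemma for $L'$-smooth functions to \eqref{Eq:updating} and then take conditional expectation on $\x^k$; unbiasedness kills the gradient--noise cross term and yields the one-step recursion
\begin{equation*}
\mathbb{E}[L_\alpha(\x^{k+1}) \mid \x^k] \,\le\, L_\alpha(\x^k) - \delta\,\|\nabla L_\alpha(\x^k)\|^2 + \frac{L'NP\sigma^2}{2k^{2\gamma}}.
\end{equation*}
Since $\I-\Z\succeq\0$ and $\sum_i f_i(\cdot)\ge\sum_i f_i(x^*)$ by Assumption~\ref{Assump:Obj}, the Lyapunov function is bounded below. Telescoping the recursion and using $\sum_k k^{-2\gamma}<\infty$ then gives simultaneously $\sup_k \mathbb{E}[L_\alpha(\x^k)]<\infty$ and $\sum_{k=1}^\infty \mathbb{E}[\|\nabla L_\alpha(\x^k)\|^2]<\infty$.

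Next I would transfer the Lyapunov bound to an $\|\x^k\|$ bound via coercivity. Because $L_\alpha(\x)\ge\alpha\,\1^\top\f(\x)$, the first conclusion forces $\sup_k\mathbb{E}[\1^\top\f(\x^k)]<\infty$. Assumption~\ref{Assump:ObjInfity} implies that outside some compact set $\|\x\|\le(M+1)\,\1^\top\f(\x)$; combining with the lower bound on $f$ yields $\mathbb{E}[\|\x^k\|]\le B$ for some finite $B$. Once $B$ is in hand, the bound $\mathbb{E}[\|\nabla\f(\x^k)\|]\le L(B+\|\x^*\|)$ is immediate from Lipschitzness, the triangle inequality $\|\x^k-\x^*\|\le\|\x^k\|+\|\x^*\|$, and absorption of the residual $\|\nabla\f(\x^*)\|$ contribution using the consensus optimality $\sum_i\nabla f_i(x^*)=\0$.

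The delicate part is upgrading $\sum_k b_k<\infty$, where $b_k:=\mathbb{E}[\|\nabla L_\alpha(\x^k)\|^2]$, to the pointwise rate $b_k=o(1/k)$. Mere summability only forces $\liminf_k k b_k=0$, so some near-monotonicity of $b_k$ is required. My plan is to exploit the SGD structure once more: using $L'$-Lipschitzness of $\nabla L_\alpha$ together with the step bound $\mathbb{E}[\|\x^{k+1}-\x^k\|^2]\le 2 b_k+2NP\sigma^2/k^{2\gamma}$, I would derive a regularity inequality of the form $b_{k+1}\le (1+c_k) b_k+d_k$ with summable perturbations $c_k, d_k=O(1/k^{2\gamma})$, and then invoke the classical fact that an almost-monotone summable sequence must be $o(1/k)$. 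Producing this refined regularity estimate, rather than the loose bound $b_{k+1}\le C b_k+O(1/k^{2\gamma})$ yielded by a crude Lipschitz argument, is where I expect the main technical difficulty to lie.
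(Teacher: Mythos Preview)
Your first two steps---the descent-lemma recursion on $L_\alpha$, the telescoping to obtain $\sup_k\mathbb E[L_\alpha(\x^k)]<\infty$ and $\sum_k\mathbb E[\|\nabla L_\alpha(\x^k)\|^2]<\infty$, and the passage from a bound on $\mathbb E[\1^\top\f(\x^k)]$ to one on $\mathbb E[\|\x^k\|]$ via Assumption~\ref{Assump:ObjInfity}---match the paper's argument essentially line for line.

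The divergence is at the $o(1/k)$ step. The paper does \emph{not} establish any near-monotonicity of $b_k$; instead it simply invokes a stand-alone lemma asserting that every positive summable sequence is automatically $o(1/k)$, justified by a limit-comparison-test argument. Your instinct that summability alone only forces $\liminf_k kb_k=0$ is in fact correct---the lemma the paper states is false in general (e.g.\ take $a_k=1/k$ on the lacunary set $k=2^m$ and $a_k=1/k^3$ otherwise; the series converges while $ka_k=1$ infinitely often), and its proof tacitly assumes $\lim_k a_k/b_k$ exists. So the paper's route is shorter precisely because it leans on a claim that does not hold without additional structure, whereas the regularity inequality $b_{k+1}\le(1+c_k)b_k+d_k$ you propose would, if established, genuinely close the gap.

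One minor remark on your gradient-bound step: consensus optimality gives $\sum_i\nabla f_i(x^*)=0$, not $\nabla\f(\x^*)=\0$, so it cannot be used to kill $\|\nabla\f(\x^*)\|$. (The paper makes the same slip, writing $\|\nabla\f(\x^k)\|=\|\nabla\f(\x^k)-\nabla\f(\x^*)\|$.) Since $B$ is an unspecified constant this is harmless---just enlarge $B$ by $\|\nabla\f(\x^*)\|/L$.
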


Theorem \ref{thm:bnd_grad} shows that with an appropriate step-size and an amplifying exponent, Algorithm~2 converges. 
But due to the compression noise, the convergence rate is sublinear.
To see this, note that $\nabla L_{\alpha}(\x)=(\I-\Z)\x+\alpha \nabla \f(\x),$ and $\1^{\top}(\I-\Z)=0.$ 
Thus, $\1^{\top}\nabla L_{\alpha}(\x)=\alpha \sum_{i=1}^{N}\nabla f_i(x_i),$ which implies $\|\alpha \sum_{i=1}^{N}\nabla f_i(x_i)\|^2\le \|\nabla L_{\alpha}(\x)\|^2.$ From Theorem \ref{thm:bnd_grad}, the convergence rate of $\mathbb{E}[\|\sum_{i=1}^{N}\nabla f_i(x_i)\|^2]$ is also $o(1/k).$

\smallskip
{\em Step 3) Proving Theorem~\ref{thm:conv_meanvec}}.
Note from Algorithm~2 and (\ref{Eq:updating}) that the following hold:
\vspace{-.05in}
\begin{align} \label{Eq:recursive}
\!\!\!\!\!\!\begin{cases}
&\x^1= \Z\x^0-\alpha_0 \nabla \f(\x^0) = -\alpha_0 \nabla \f(\x^0), \\
&\x^2= \Z\x^1-\alpha_1 \nabla \f(\x^1) + \Z\frac{\epsilon^1}{1^\gamma}, \\
&~~~= -\alpha_0 \Z\nabla \f(\x^0) - \alpha_1 \nabla \f(\x^0) + \Z\epsilon^1, \\ 
&\hspace{.21in}\vdots\\
&\x^k  =-\sum_{i=0}^{k-1}\alpha_i \Z^{k-i-1}\nabla \f(\x^i) + \sum_{i=1}^{k-1} \Z^{k-i}\frac{\epsilon^i}{i^\gamma}.
\end{cases}
\end{align}
Eq.~(\ref{Eq:recursive}) characterize the trajectory of the iterates. Each iterate consists of two parts, one from gradients and the other from noises. 
Note that in (\ref{Eq:recursive}), the variance of accumulated noises are in the form of $h_k \triangleq \sum_{i=1}^{k}\frac{\beta^{k-i}}{i^\gamma}$. 
Next, we prove an interesting lemma for $h_k$, which is useful in proving Theorem~\ref{thm:conv_meanvec}.
\begin{lem}\label{Lemma:convolution}
Define $h_k \triangleq \sum_{i=1}^{k}\frac{\beta^{k-i}}{i^\gamma},$ where $\beta \in [0,1)$ and $\gamma>0.$ It follows that $h_k= O(\frac{1}{k^\gamma}).$
\end{lem}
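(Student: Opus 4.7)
\medskip
\noindent\textbf{Proof plan for Lemma \ref{Lemma:convolution}.}
The sum $h_k=\sum_{i=1}^{k}\beta^{k-i}/i^{\gamma}$ is a discrete convolution of a geometrically decaying kernel $\beta^{k-i}$ with a polynomially decaying sequence $1/i^{\gamma}$. The key intuition is that, since $\beta\in[0,1)$, the geometric weights $\beta^{k-i}$ concentrate their mass within $O(1)$ indices of $i=k$, where $1/i^{\gamma}$ is essentially $1/k^{\gamma}$; heuristically $h_k\approx (1/k^{\gamma})\sum_{j\ge 0}\beta^{j}=1/((1-\beta)k^{\gamma})$. The plan is to make this heuristic rigorous by a simple split of the summation range, avoiding any appeal to integral asymptotics.

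Concretely, let $m=\lfloor k/2\rfloor$ and write $h_k = S_1 + S_2$, where $S_1=\sum_{i=1}^{m}\beta^{k-i}/i^{\gamma}$ and $S_2=\sum_{i=m+1}^{k}\beta^{k-i}/i^{\gamma}$. For $S_2$, note that $i>k/2$ implies $1/i^{\gamma}\le (2/k)^{\gamma}$, so
\begin{equation*}
S_2 \le \frac{2^{\gamma}}{k^{\gamma}}\sum_{i=m+1}^{k}\beta^{k-i} \le \frac{2^{\gamma}}{k^{\gamma}}\sum_{j=0}^{\infty}\beta^{j} = \frac{2^{\gamma}}{(1-\beta)\,k^{\gamma}},
\end{equation*}
which is already $O(1/k^{\gamma})$. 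For $S_1$, the geometric factor satisfies $\beta^{k-i}\le \beta^{k-m}\le \beta^{k/2}$ for all $i\le m$, and the prefactor $\sum_{i=1}^{m}1/i^{\gamma}$ grows at most polynomially in $k$ (specifically, $O(k^{1-\gamma})$ if $\gamma<1$, $O(\log k)$ if $\gamma=1$, $O(1)$ if $\gamma>1$). In every case $\beta^{k/2}$ decays faster than any polynomial, so $S_1 = o(1/k^{\gamma})$, and combining the two estimates yields $h_k = O(1/k^{\gamma})$.

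Two comments on the execution. First, an alternative (and perhaps cleaner) route is an induction on the recursion $h_{k+1}=\beta h_k + 1/(k+1)^{\gamma}$: guess $h_k\le M/k^{\gamma}$ and verify that the inductive step reduces to checking $\beta(1+1/k)^{\gamma}<1$ for $k$ large enough, which is automatic since $\beta<1$; this even identifies the sharp constant $M\to 1/(1-\beta)$. Either route works, but the split-sum argument is more direct and needs no case analysis on whether $\gamma$ lies above or below $1$ beyond a one-line bookkeeping remark. Second, I do not anticipate any real obstacle here; the only point that requires a small moment of care is ensuring that the crude polynomial bound on $\sum_{i\le m}1/i^{\gamma}$ is written so that the reader sees it is dominated by $\beta^{k/2}$ for all three regimes of $\gamma$. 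This is enough to establish the bound used later in the main theorem, where $h_k$ controls the accumulated compression noise in the trajectory expansion (\ref{Eq:recursive}).
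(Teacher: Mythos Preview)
Your proof is correct, and it takes a genuinely different route from the paper's own argument. The paper bounds $h_k$ by the integral $\beta^{k}\int_{1}^{k}\beta^{-x}x^{-\gamma}\,dx$, then evaluates that integral via a complex substitution $t=e^{-i\pi}y$ and the asymptotics of the upper incomplete gamma function $\Gamma(s,\cdot)$, eventually arriving at $h_k=O(1/k^{\gamma})$. Your split at $m=\lfloor k/2\rfloor$ is far more elementary: the tail $S_2$ is handled by the trivial bound $1/i^{\gamma}\le(2/k)^{\gamma}$ and a geometric sum, and the head $S_1$ by $\beta^{k-i}\le\beta^{k/2}$ times a polynomial-in-$k$ factor that is crushed by the exponential. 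Both steps are one-liners. The recursion argument you sketch ($h_{k+1}=\beta h_k+(k+1)^{-\gamma}$) is an equally valid alternative and gives the asymptotic constant $1/(1-\beta)$ directly.

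What each approach buys: the paper's incomplete-gamma computation gives a precise asymptotic expansion (leading term $1/(-\log\beta)\cdot k^{-\gamma}$ plus $O(\beta^{k})$), but at the cost of considerable machinery, and as written it is carried out only for $s=1-\gamma\in(0,1)$, i.e., $\gamma\in(0,1)$, even though the lemma is stated for all $\gamma>0$. Your argument is uniform in $\gamma>0$ with only the one-line case split on the growth of $\sum_{i\le m}i^{-\gamma}$, and delivers exactly what is needed downstream (the $O(1/k^{\gamma})$ bound used in the proof of Theorem~\ref{thm:conv_meanvec}). For the purposes of this paper your approach is simpler and strictly sufficient.
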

Lemma \ref{Lemma:convolution} implies that the negative effect of  compression noises can be ignored asymptotically, which induces convergence.
With (\ref{Eq:recursive}), Theorem \ref{thm:bnd_grad} and Lemma~\ref{Lemma:convolution}, 
we can finally prove Theorem~\ref{thm:conv_meanvec} and the details are relegated to Appendix~\ref{appdx:conv_meanvec}.

\smallskip
{\em Step 4) Proving Theorems~\ref{thm:const_step} and \ref{thm:diminish_step}}. With some algebraic derivation, we can show the following fundamental result:

\begin{lem}\label{Lemma:5}
Let $\mathcal{F}_k \!=\! \sigma\langle \x^1,\!\cdots\!, \x^k \rangle$ be a filtration.
Under Assumptions \ref{Assump:Obj}-\ref{Assump:ObjInfity}, the following inequality holds:
{\small{\begin{align}\label{Eq:lemma5}
&\mathbb{E}\Big[\frac{1}{N}\sum_{i=1}^{N} f_i(\bar{x}^{k+1}) \Big|\mathcal{F}_k \Big]+ \frac{\alpha_k}{2} \Big\|\frac{1}{N}\sum_{i=1}^{N} \nabla f_i(\bar{x}^{k}) \Big\|^2 \notag\\
&\le \frac{1}{N}\sum_{i=1}^{N} f_i(\bar{x}^{k})  + \frac{\alpha_k L^2}{2N^2}\sum_{i=1}^{N} \Big\| \bar{x}^{k}- x_{i,k} \Big\|^2 + \frac{LP\sigma^2}{2Nk^{2\gamma}},
\end{align}}}
where $\alpha_k$ is the step-size at the $k$-the iteration. 
\end{lem}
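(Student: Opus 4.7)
The plan is to view the averaged iterate $\bar{x}^k$ as undergoing an inexact stochastic gradient step on $F(x)\triangleq \tfrac{1}{N}\sum_i f_i(x)$ (which inherits an $L$-Lipschitz gradient from Assumption~\ref{Assump:Obj}) and to apply the standard quadratic descent lemma. The inexactness has two sources: a deterministic gradient heterogeneity, because $\nabla f_i$ is evaluated at the local $x_{i,k}$ rather than at $\bar{x}^k$, and a zero-mean compression noise entering through (\ref{Eq:ActNoise}).

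First, I would derive the averaged-iterate recursion. Summing the ADC-DGD update (\ref{eqn:adc_dgd}) over $i$, substituting (\ref{Eq:ActNoise}) for $\tilde{x}_{j,k}$, and using the column-stochasticity of $\W$ gives
\begin{equation*}
\bar{x}^{k+1} = \bar{x}^k - \frac{\alpha_k}{N}\sum_{i=1}^N \nabla f_i(x_{i,k}) + \frac{1}{Nk^\gamma}\sum_{j=1}^N \epsilon_{k^\gamma y_{j,k}}.
\end{equation*}
Write $g_k \triangleq \tfrac{1}{N}\sum_i \nabla f_i(\bar{x}^k)$, $\tilde g_k \triangleq \tfrac{1}{N}\sum_i \nabla f_i(x_{i,k})$, and let $n_k$ denote the averaged compression noise. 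By Definition~\ref{defn:usco} together with independence of compressions across nodes, $\mathbb{E}[n_k\,|\,\mathcal{F}_k]=0$ and $\mathbb{E}[\|n_k\|^2\,|\,\mathcal{F}_k]\le P\sigma^2/(N k^{2\gamma})$.

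Second, applying the $L$-smooth descent inequality
\begin{equation*}
F(\bar{x}^{k+1}) \le F(\bar{x}^k) + \langle g_k, \bar{x}^{k+1}-\bar{x}^k\rangle + \tfrac{L}{2}\|\bar{x}^{k+1}-\bar{x}^k\|^2,
\end{equation*}
and taking $\mathbb{E}[\,\cdot\,|\,\mathcal{F}_k]$, the inner product collapses to $-\alpha_k\langle g_k,\tilde g_k\rangle$; since $\tilde g_k$ is $\mathcal{F}_k$-measurable while $n_k$ has zero conditional mean, the quadratic term splits cleanly into $\alpha_k^2\|\tilde g_k\|^2 + \mathbb{E}[\|n_k\|^2\,|\,\mathcal{F}_k]$, with the second piece producing the $\tfrac{LP\sigma^2}{2Nk^{2\gamma}}$ term of (\ref{Eq:lemma5}). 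I would then invoke the polarization identity $-\langle g_k,\tilde g_k\rangle = -\tfrac{1}{2}\|g_k\|^2 - \tfrac{1}{2}\|\tilde g_k\|^2 + \tfrac{1}{2}\|g_k-\tilde g_k\|^2$ to isolate $-\tfrac{\alpha_k}{2}\|g_k\|^2$ (which moves to the left-hand side), and bound the discrepancy via Jensen plus Lipschitz continuity,
\begin{equation*}
\|g_k-\tilde g_k\|^2 \le \tfrac{L^2}{N}\sum_{i=1}^N \|\bar{x}^k - x_{i,k}\|^2,
\end{equation*}
which reproduces the disagreement term in (\ref{Eq:lemma5}).

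The only nontrivial bookkeeping is the residual coefficient $\tfrac{\alpha_k}{2}(L\alpha_k - 1)$ on $\|\tilde g_k\|^2$ left over after combining the polarization identity with the $L/2$ quadratic term: this is non-positive provided $\alpha_k \le 1/L$, in which case the term is simply dropped and (\ref{Eq:lemma5}) follows. This step-size condition is compatible with (and implied by a mild strengthening of) the hypothesis $\alpha < (1+\lambda_N(\W))/L$ used in Theorem~\ref{thm:const_step}, and it holds for all sufficiently large $k$ under the diminishing step-sizes of Theorem~\ref{thm:diminish_step}. I anticipate this coefficient check to be the main (minor) obstacle; everything else reduces to the averaged-iterate recursion and the variance bound for $n_k$ derived above.
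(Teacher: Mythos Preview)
Your approach is essentially the same as the paper's: both derive the averaged-iterate recursion from column-stochasticity of $\W$, apply the $L$-smooth descent inequality to $F=\tfrac{1}{N}\sum_i f_i$, take conditional expectation so the zero-mean compression noise decouples, and then split $-\alpha_k\langle g_k,\tilde g_k\rangle$ into $-\tfrac{\alpha_k}{2}\|g_k\|^2$ plus a heterogeneity term controlled by Lipschitz continuity and Jensen. Your polarization identity is exactly the paper's add--subtract--then--Young manipulation (its step~(e)) written more compactly; in particular, the paper's step~(e) also tacitly requires $\alpha_k - L\alpha_k^2 \ge 0$, i.e.\ $\alpha_k \le 1/L$, so the step-size caveat you flag is genuinely present in the original argument as well, just unstated. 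One minor discrepancy: your Jensen bound gives a coefficient $\tfrac{\alpha_k L^2}{2N}$ on the consensus error, whereas the lemma as stated has $\tfrac{\alpha_k L^2}{2N^2}$; the paper's own step~(f) produces only the $1/N$ factor under the standard inequality $\|\tfrac{1}{N}\sum_i a_i\|^2 \le \tfrac{1}{N}\sum_i\|a_i\|^2$, so this looks like a slip in the statement rather than a gap in your argument, and it does not affect the downstream use in Theorems~\ref{thm:const_step}--\ref{thm:diminish_step}.
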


Eq.~(\ref{Eq:lemma5}) in Lemma \ref{Lemma:5} is similar to the contraction in stochastic gradient descent algorithm, which relates the objective values and gradient norm. 
Then, by telescoping and the supermartingale convergence theorem, we can prove Theorems~\ref{thm:const_step} and \ref{thm:diminish_step} (see Appendices~\ref{appdx:thm_const_step} and \ref{appdx:thm_diminish_step}).

\subsection{Understanding the Role of the Amplifying Exponent $\gamma$}\label{sec:gamma}
In our algorithm, the amplifying exponent $\gamma$ is a key component to adjust the communication rate. 
From Theorems~\ref{thm:const_step} and \ref{thm:diminish_step}, it can be seen that within $(1/2,1],$ the larger $\gamma$ means the faster convergence. 
However, since the transmitted value is $C(k^\gamma {\mathbf{y}}^k),$ we can see that a larger $\gamma$ leads to a larger $k^\gamma {\mathbf{y}}^k,$ which may lead to overflow error (for example, type `int8' in Matlab could only present data within $(-128,127)$). 
Hence, it is necessary to guarantee that $k^\gamma {\mathbf{y}}^k$ would not grow too fast. 
Recalling Eqs.~(\ref{Eq:updating}) and (\ref{eqn:adc_dgd}) in Algorithm~2, we have 
\begin{align*}
\y^k&={\x}^k-\tilde{\x}^{k-1}=(\I+\Z){\epsilon}^k/k^\gamma-\nabla L_{\alpha_k}({\x}^k).
\end{align*}
Under the expectation, the transmitted value is bounded by
\begin{align*}
\mathbb{E}[\|k^\gamma \y^k\|]&=\mathbb{E}[\|(\I+\Z){\epsilon}^k-k^\gamma\nabla L_{\alpha_k}({\x}^k)\|]\\
&\le \mathbb{E}[\|(\I+\Z){\epsilon}^k\|+\|k^\gamma\nabla L_{\alpha_k}({\x}^k)\|]\\
&\le 2\mathbb{E}[\|{\epsilon}^k\|]+k^\gamma\mathbb{E}[\|\nabla L_{\alpha_k}({\x}^k)\|].
\end{align*}
From Definition~\ref{defn:usco}, we have that each element of $\mathbb{E}[\|{\epsilon}^k\|]$ is bounded by $\sigma.$ 
From Theorem \ref{thm:bnd_grad}, we have $\mathbb{E}[\|\nabla L_{\alpha_k}({\x}^k)\|^2]=o(1/k).$ 
Thus, $\mathbb{E}[\|k^\gamma {\mathbf{y}}^k\|]$ is bounded by $o(k^{\gamma-\frac{1}{2}})$.
We state this result in the following proposition:
\begin{prop}\label{prop:trans_signal}
Under Assumptions \ref{Assump:Obj}-\ref{Assump:ObjInfity}, with $\gamma>\frac{1}{2},$ the transmitted value $k^\gamma {\mathbf{y}}^k$ satisfies $\mathbb{E}[\|k^\gamma {\mathbf{y}}^k\|]=o(k^{\gamma-\frac{1}{2}}).$
\end{prop}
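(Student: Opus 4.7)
The plan is to formalize the computation already sketched in the paragraph preceding the proposition. First I would expand the definition of $\y^k$ in terms of the iterate recursion. Using $y_{i,k+1} = x_{i,k+1} - \tilde{x}_{i,k}$ in vector form gives $\y^{k+1} = \x^{k+1} - \tilde{\x}^k$, and combining the compact update (\ref{Eq:updating}), $\x^{k+1} = \x^k - \nabla L_{\alpha_k}(\x^k) + \Z\epsilon^k/k^\gamma$, with the identity (\ref{Eq:ActNoise}), $\tilde{\x}^k = \x^k + \epsilon^k/k^\gamma$, yields an explicit decomposition of $\y^{k+1}$ into a pure noise piece (of the form $(\Z - \I)\epsilon^k/k^\gamma$, matching the form used by the authors up to sign/index conventions) and the negative Lyapunov gradient $-\nabla L_{\alpha_k}(\x^k)$.

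Next I would scale by $k^\gamma$ and apply the triangle inequality to get
\begin{equation*}
\mathbb{E}\bigl[\|k^\gamma \y^k\|\bigr] \;\le\; \mathbb{E}\bigl[\|(\I\pm\Z)\epsilon^{k-1}\|\bigr] \;+\; k^\gamma\, \mathbb{E}\bigl[\|\nabla L_{\alpha_k}(\x^k)\|\bigr].
\end{equation*}
The first term is controlled directly by Definition~\ref{defn:usco}: since $\|\I\pm\Z\|\le 2$ and each coordinate of $\epsilon^{k-1}$ has variance at most $\sigma^2$, Jensen's inequality gives $\mathbb{E}[\|\epsilon^{k-1}\|] \le \sqrt{\mathbb{E}[\|\epsilon^{k-1}\|^2]} \le \sqrt{NP}\,\sigma$, so the entire first term is $O(1)$.

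For the second term, I would invoke Theorem~\ref{thm:bnd_grad}, which supplies $\mathbb{E}[\|\nabla L_{\alpha_k}(\x^k)\|^2] = o(1/k)$, and then apply Jensen's inequality once more to pass from the squared-norm bound to the norm bound, obtaining $\mathbb{E}[\|\nabla L_{\alpha_k}(\x^k)\|] = o(1/\sqrt{k})$. Multiplying by $k^\gamma$ then yields $k^\gamma \mathbb{E}[\|\nabla L_{\alpha_k}(\x^k)\|] = o(k^{\gamma - 1/2})$. Combining the two pieces, and using that for $\gamma > 1/2$ the $O(1)$ noise contribution is dominated by the $o(k^{\gamma-1/2})$ gradient contribution, we conclude $\mathbb{E}[\|k^\gamma \y^k\|] = o(k^{\gamma - 1/2})$.

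There is no real obstacle here, as the proof is essentially a bookkeeping exercise built on Theorem~\ref{thm:bnd_grad} and Definition~\ref{defn:usco}. The only step that requires mild care is the passage from the second-moment control of $\nabla L_{\alpha_k}(\x^k)$ to first-moment control via Jensen's inequality, together with verifying that the constant-order noise term is indeed $o(k^{\gamma-1/2})$ precisely when $\gamma > 1/2$---which is exactly the hypothesis under which the preceding theorems already guarantee the needed gradient decay.
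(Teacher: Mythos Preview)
Your proposal is correct and follows essentially the same argument as the paper: decompose $\y^k$ into a bounded noise piece $(\I\pm\Z)\epsilon/k^\gamma$ and the Lyapunov gradient, then invoke Definition~\ref{defn:usco} for the first and Theorem~\ref{thm:bnd_grad} for the second. If anything, you are slightly more careful than the paper in spelling out the Jensen step from $\mathbb{E}[\|\nabla L_{\alpha_k}(\x^k)\|^2]=o(1/k)$ to $\mathbb{E}[\|\nabla L_{\alpha_k}(\x^k)\|]=o(1/\sqrt{k})$ and in noting that the $O(1)$ noise term is absorbed into $o(k^{\gamma-1/2})$ precisely because $\gamma>1/2$.
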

The insight from Proposition~\ref{prop:trans_signal} is that with $\gamma \in (\frac{1}{2},1],$ the growth speed for the transmitted value is slower than $o(\sqrt{k}),$ which is not very fast.

\section{Numerical Results} \label{sec:numerical}

In this section, we will present several numerical experiments to further validate the performance of ADC-DGD. 

\smallskip
{\em 1) Effect of Compression:} 
First, we compare ADC-DGD with some existing methods to show its convergence rate and communication-efficiency.
Consider a four-node network as shown in Fig.~\ref{Fig:network} with the following  global objective function: $\min_x f(x)=f_1(x)+f_2(x)+f_3(x)+f_4(x)$, 
where $f_1(x)=-4x^2$, $f_2(x)= 2(x-0.2)^2$, $f_3(x)= 2(x+0.3)^2$, and $f_4(x)= 5(x-0.1)^2$.
It can be seen that $f_1(x)$ is non-convex, while the rest are convex.
The communication consensus matrix used in this experiment is shown in Fig.~\ref{Fig:mixing_matrix}.
%

\begin{figure}[t!]
\centering
\begin{minipage}[t]{0.21\textwidth}
\centering
\includegraphics[width=0.54\textwidth]{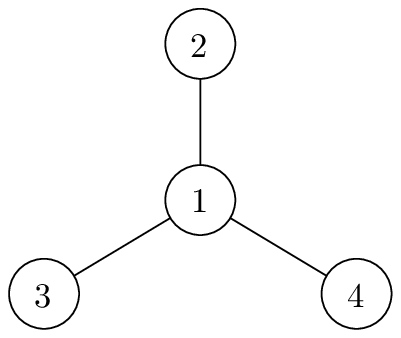}
\caption{A four-node network.}
\label{Fig:network}
\end{minipage}
\hspace{.1in}
\begin{minipage}[t]{0.25\textwidth}
\vspace{-.68in}
\begin{align}
{\small 
\mathbf{W}=\left[\begin{matrix}
  1/4 &1/4&1/4&1/4\\
   1/4&3/4&0&0\\
   1/4&0&3/4&0\\
   1/4&0&0&3/4
  \end{matrix}\right], \notag
}
  \end{align}
 \caption{The consensus matrix for Fig.~\ref{Fig:network}.}
 \label{Fig:mixing_matrix}
\end{minipage}
\vspace{-.1in}
\end{figure}

In our simulation, we compare our ADC-DGD with the conventional DGD and $\text{DGD}^t.$
For $\text{DGD}^t,$ we consider two cases: $t=3$ and $t=5.$
In ADC-DGD, the amplifying exponent $\gamma$ is set to $1$. 
We use two step-size strategies: 1) constant step-size $\alpha$ (i.e. $\eta=0$) and 2) diminishing step-size $\alpha/\sqrt{k}$ (i.e. $\eta=1/2$).
We adopt the quantized operator in \cite{alistarh2017qsgd} as the compression operator. 
After compression, the values are integer. 
Hence, they can be stored as type `int16', which is 2 bytes. 
However, the uncompressed values are stored as type `double', costing 8 bytes. 
The convergence results for one trial are illustrated in Fig.~\ref{Fig:IL} and Fig.~\ref{Fig:BL}. 
 
\begin{figure}[t!]
\centering
\begin{minipage}[t]{0.24\textwidth}
\centering
\includegraphics[width=1\textwidth]{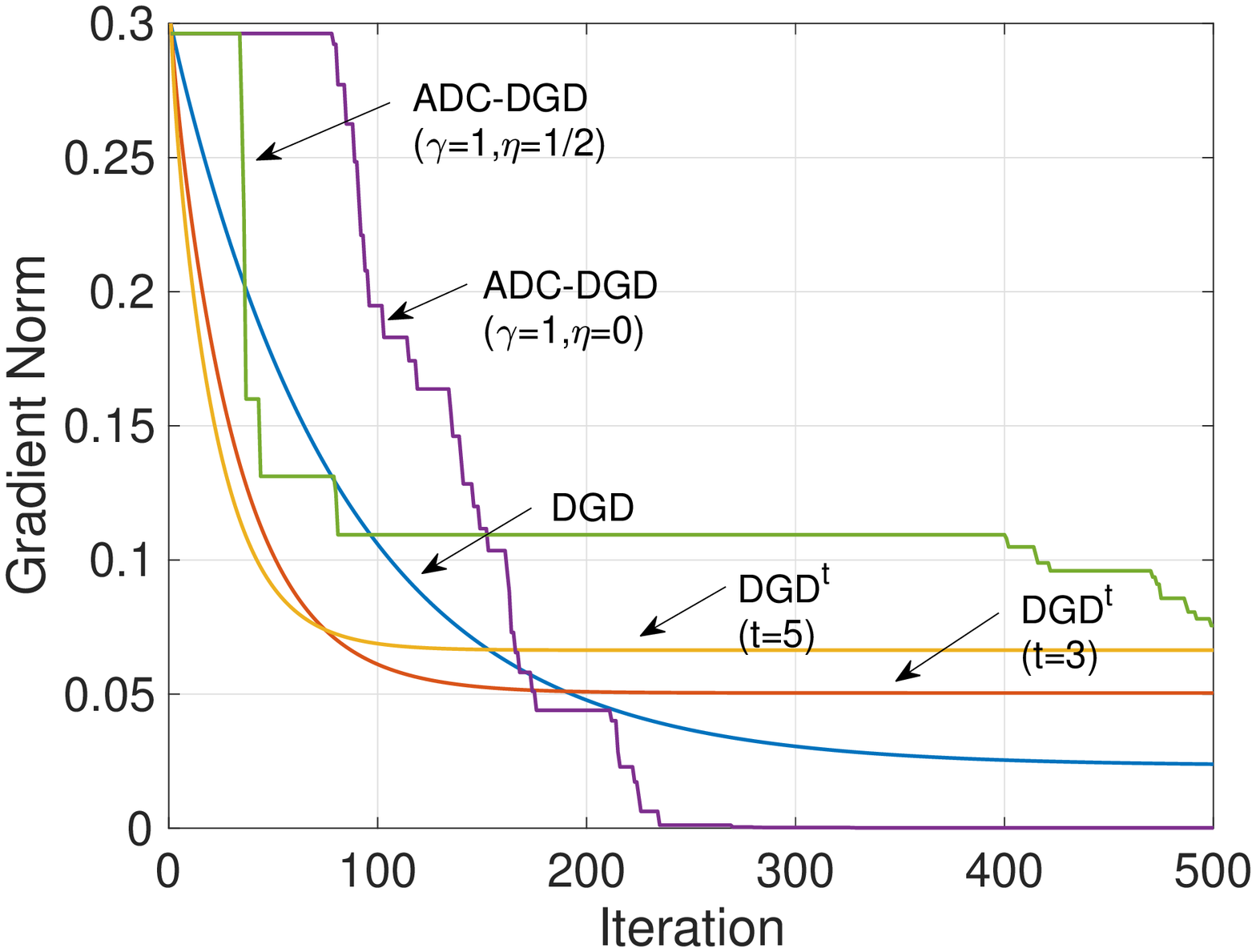}
\caption{Convergence comparisons between ADC-DGD, DGD, and DGD$^t$.}
\label{Fig:IL}
\end{minipage}
\begin{minipage}[t]{0.24\textwidth}
\centering
\includegraphics[width=1\textwidth]{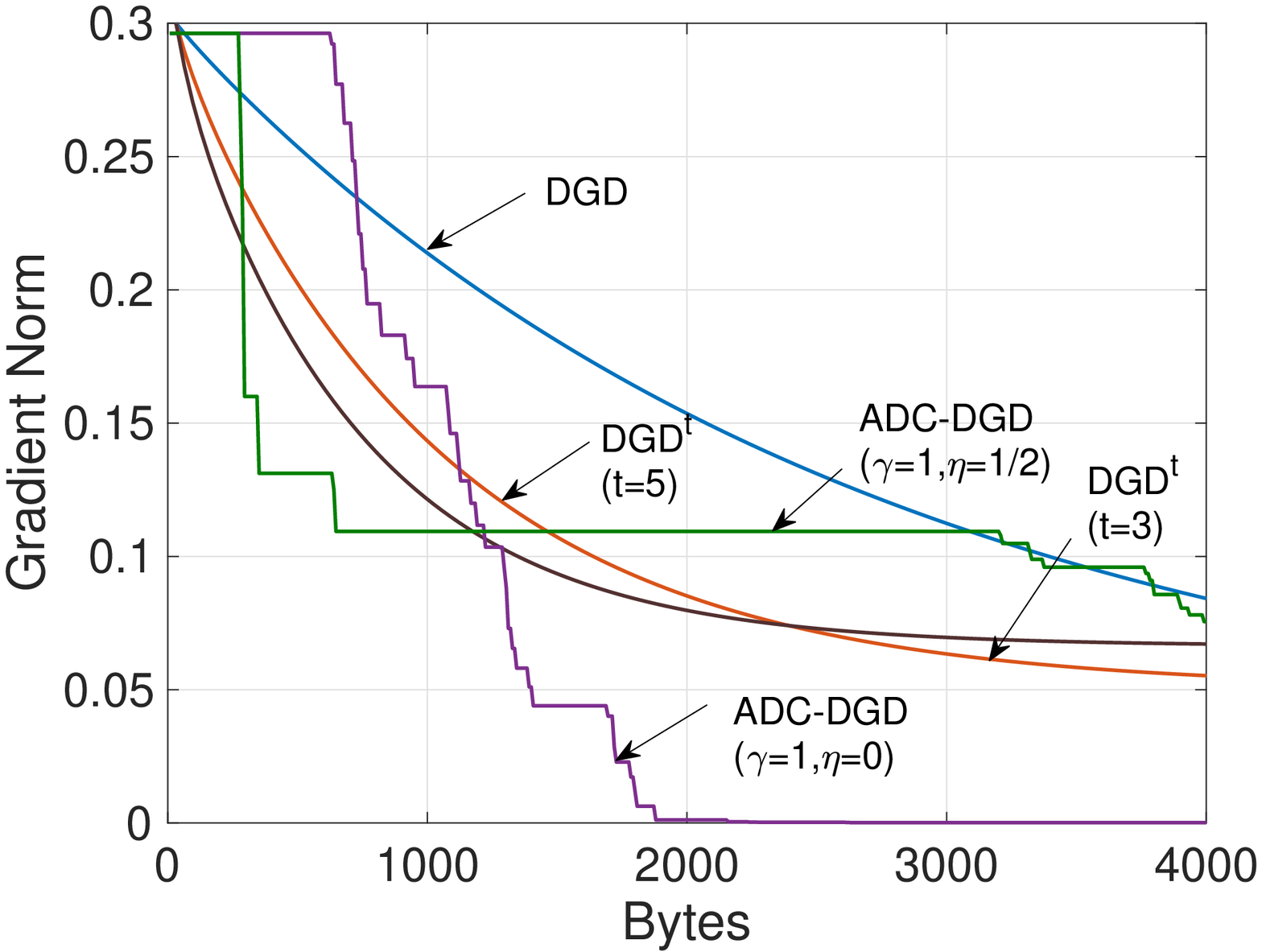}
\caption{Amount of exchanged information (bytes) vs gradient norm.}
\label{Fig:BL}
\end{minipage}
\vspace{-.2in}
\end{figure}

From the simulations, we can see that: 
1) with a fixed step-size, all algorithms converge to an error ball, while the radiuses of the conventional DGD and ADC-DGD are relatively smaller. 
This is because with a larger $t,$ $\beta^t$ becomes smaller and hence the error ball $O(\alpha/(1-\beta^t))$ for $\text{DGD}^t$ becomes larger; 
2) By using compression, the convergence process of ADC-DGD is relatively less smooth. 
But the compression noise does not affect convergence. 
With the same step-size, the conventional DGD and ADC-DGD have the almost the {\em same} convergence rate;
3) By using diminishing step-sizes, the convergence speed for ADC-DGD becomes slower. 
However, the objective value remains decreasing;
4) By comparing the amount of exchanged information, ADC-DGD with the fixed step-size converges the fastest, using only 2000 bytes. 
This shows that our algorithm is the {\em most} communication-efficient.
 
\smallskip
{\em 2) Effect of the Amplifying Exponent:}
Next, we show the effect of the amplifying exponent $\gamma.$ 
As discussed in Section \ref{sec:gamma}, with a small $\gamma,$ the noise caused by compression could lead to a slow convergence.
On the other hand, with a large $\gamma,$ the transmitted value $k^\gamma y$ could be too large and cause overflow, especially for quantized compressed operator. 
Here, we change $\gamma$ using $\{0.6,0.8,1.0,1.2\}$ and keep the rest of the parameters the same.
For each $\gamma,$ we repeat the algorithm $100$ times and compute the average objective values, as well as the maximum transmitted value from all the nodes in each iteration. 
The simulation results are shown in Figs.~\ref{Fig:GE} and \ref{Fig:GY}.
We can see that, with a larger $\gamma$ value, the algorithm converges faster and the curve is smoother, while the transmitted values are increasing a little bit faster. 
In this example, we can see that $\gamma=0.8$ strikes a good balance between convergence and maximum transmitted value.

\begin{figure}[t!]
\centering
\begin{minipage}[t]{0.24\textwidth}
\centering
\includegraphics[width=1\textwidth]{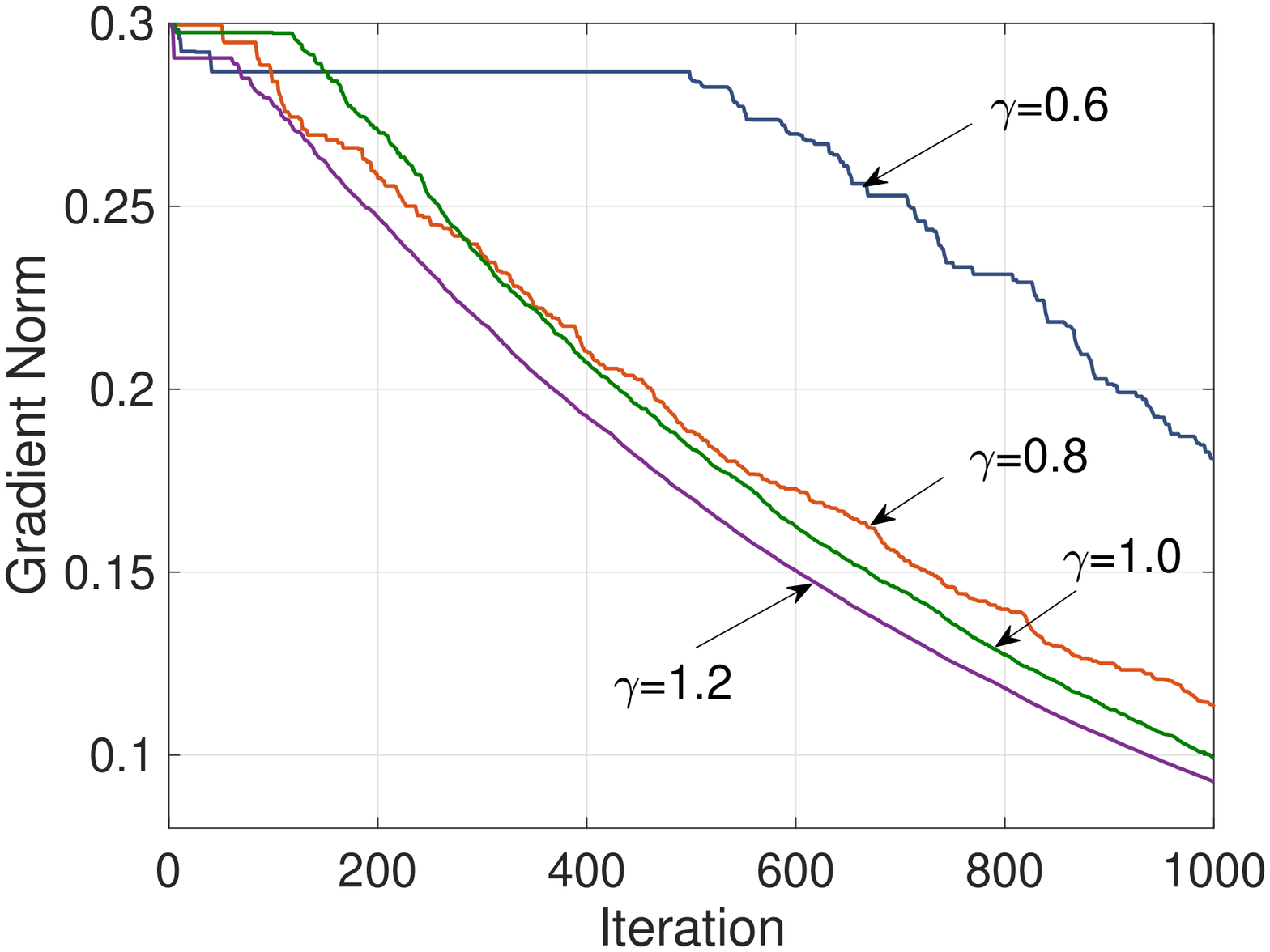}
\caption{Convergence behaviors under different choices of $\gamma$.}
\label{Fig:GE}
\end{minipage}
\begin{minipage}[t]{0.24\textwidth}
\centering
\includegraphics[width=1\textwidth]{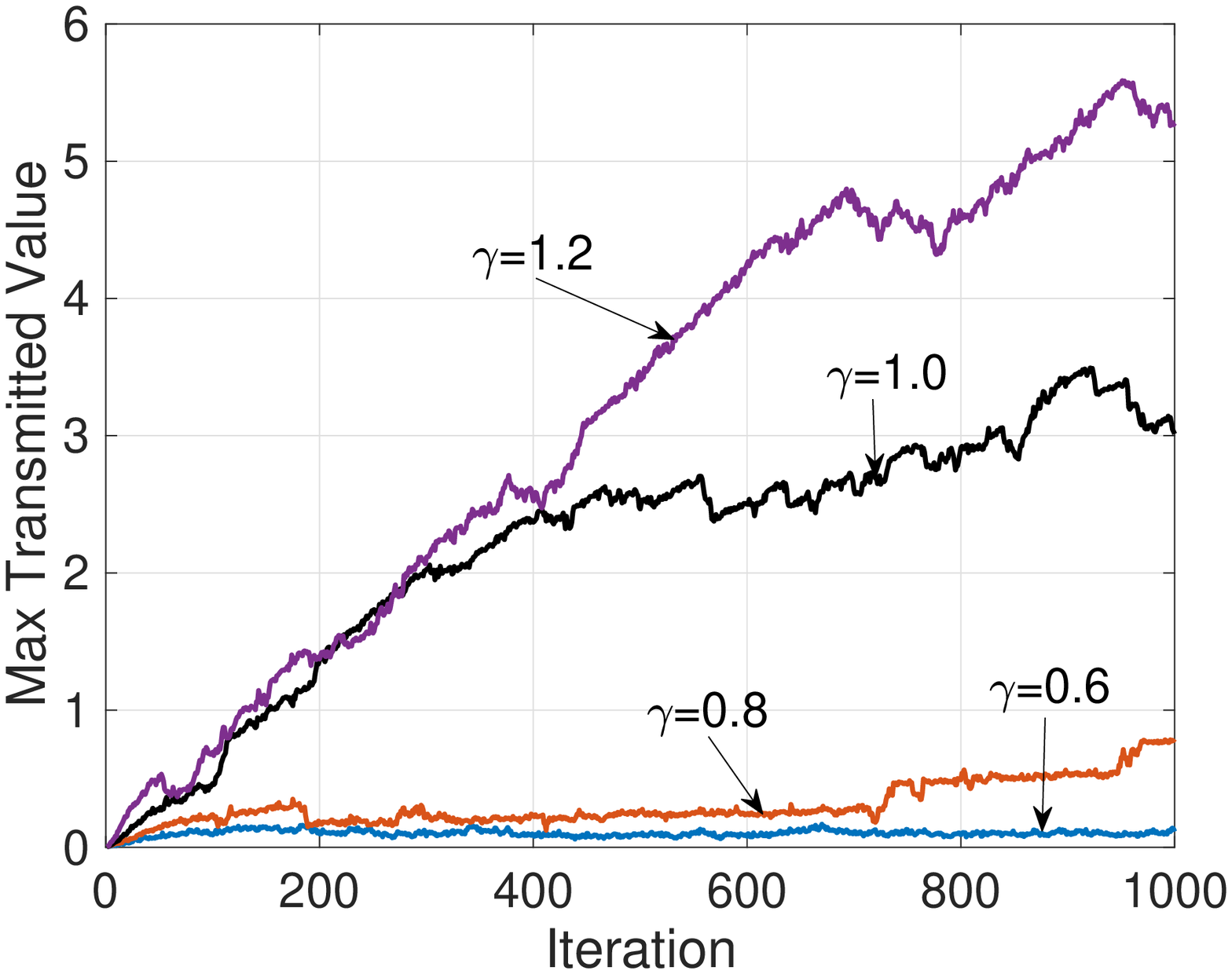}
\caption{Growth of transmitted values vs. number of iterations.}
\label{Fig:GY}
\end{minipage}
\vspace{-.1in}
\end{figure}

\smallskip
{\em 3) Effect of Network Size:}
The following  simulations indicate that our algorithm could be scaled to large-size networks. 
In our simulation, we consider the `circle' system: each node only connects with two neighboring nodes and forms a circle. 
For example, Fig.~\ref{Fig:cir} shows a five-node circle. 
We set $n$ to be $3$, $5$, $10$, $20$ in our experiment.
The local objectives are in the form of $f_i(x)=a_i(x-b_i)^2.$ 
In our simulation, $\{a_i,b_i\}_{i=1}^{n}$ are independently randomly generated: $a_i\sim \text{Uniform}[0,10]$ and $b_i \sim \text{Uniform}[0,1]$. 
For each value of $n$, we repeat $100$ trials and compute the average gradient norm.
The convergence results are shown in Fig.~\ref{Fig:node}. 
It can be seen that our algorithm works well as the network size increases, demonstrating the scalability of ADC-DGD. 

\begin{figure}[t!]
\centering
\begin{minipage}[t]{0.24\textwidth}
\centering
\vspace{-1.07in}
\includegraphics[width=2.6cm]{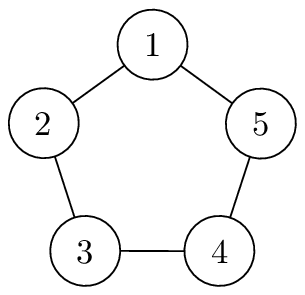}
\vspace{.08in}
\caption{The 5-node circle topology.}
\label{Fig:cir}
\end{minipage}
\begin{minipage}[t]{0.24\textwidth}
\centering
\includegraphics[width=1\textwidth]{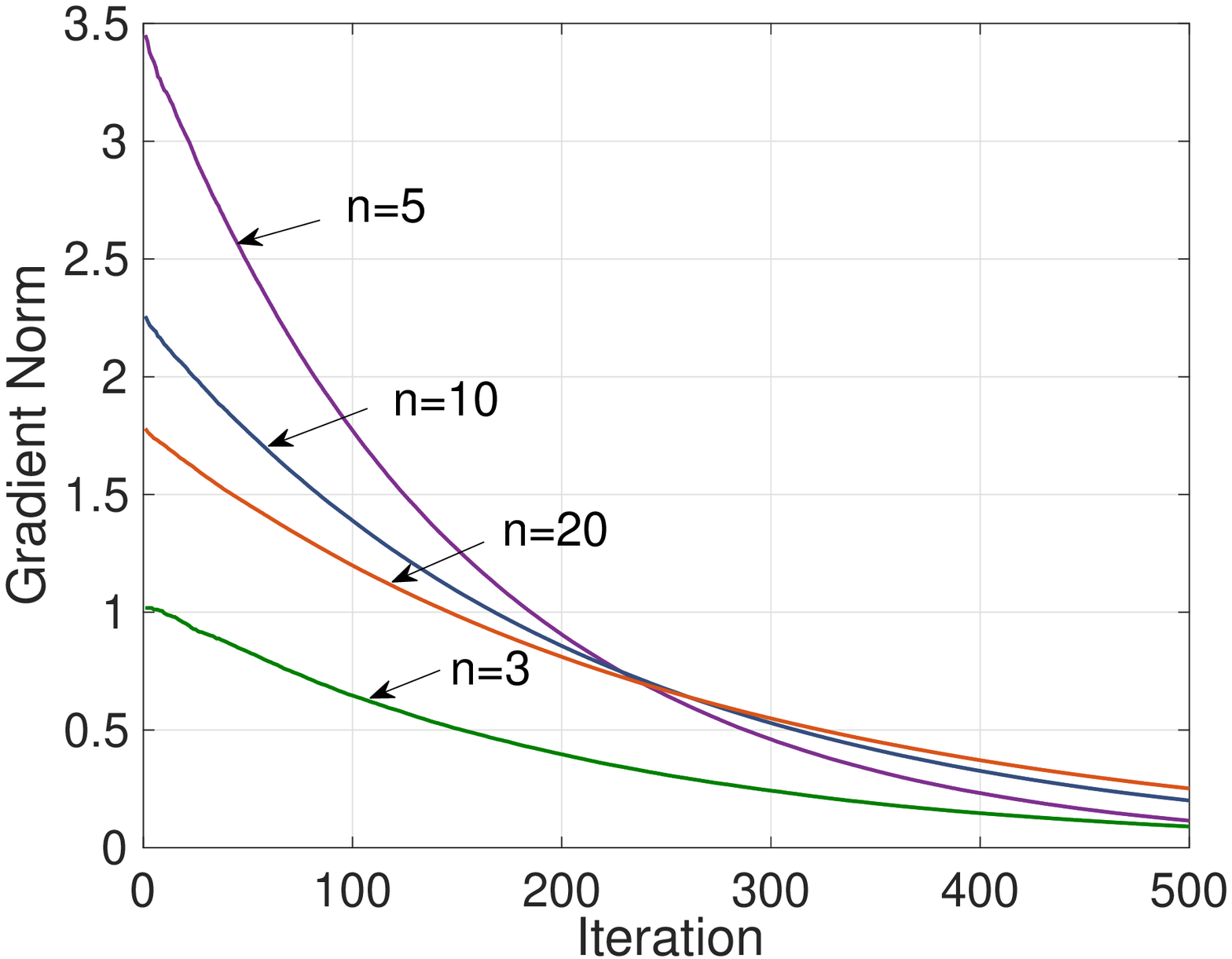}
\caption{The effect of network size.}
\label{Fig:node}
\end{minipage}
\vspace{-.2in}
\end{figure}

\section{Conclusion} \label{sec:conclusion}

In this paper, we considered designing communication-efficient network consensus optimization algorithms in networks with slow communication rates.
We proposed a new algorithm called amplified-differential compression decentralized gradient descent (ADC-DGD), which is based on compression to reduce communication costs. 
We investigated the convergence behavior of ADC-DGD on smooth but possibly non-convex objectives in this work.
We showed that: 1) by employing a fixed step-size $\alpha$, ADC-DGD converges with the $O(1/k^{\min(1, \gamma)})$ ergodic rate until reaching an error ball of size $O(\alpha^2)$ with the amplified parameter $\gamma$; 
2) ADC-DGD enjoys the best convergence rate $o(1/\sqrt{k})$ and converge to a stationary point almost surely with diminishing step-sizes. 
Consensus optimization with compressed information is an important and under-explored area.
An interesting future topic is to generalize our ADC-DGD algorithmic framework to analyze cases with local stochastic gradients, which could further lower the implementation complexity of ADC-DGD.

\bibliographystyle{IEEEtran}
\bibliography{reference.bib}

\vspace{12pt}

\appendices
\section{Proof for Lemma \ref{Lemma:2}} \label{appdx:proof_lemma2}
Without loss of generality, we prove the case of one-dimensional objective. Firstly, we consider $f_i(\cdot)$ reach the minimal at $x_i^*=0$ and $f_i(x_i^*)=0.$ With the convexity of $f_i(\cdot),$ $f_i(x)\le (1-\frac{1}{a})f_i(0)+\frac{1}{a}f_i(a x),$ $\forall x,~a>1.$ Consider $x_0$ with $f_i(x_0)\neq 0,$ and define $\max_i\{\frac{\|x_0\|}{f_i(x_0)}\}=M_0.$ Consider $\forall y>x_0>0,$ there exists a constant $a_y>1$ such that $y=a_yx_0$ and thus $f_i(x_0)\le \frac{1}{a_y}f_i(y).$ Hence, $\frac{\|y\|}{f_i(y)}\le  \frac{\|a_yx_0\|}{a_yf_i(x_0)}\le M_0.$ It is easy to obtain the same result for negative values $y.$ Therefore, $\frac{\sum_{i=1}^{N}\|x_i\|}{\sum_{i=1}^{N}f_i(x_i)}\le \sum_{i=1}^{N}\frac{\|x_i\|}{f_i(x_i)} \le NM_0:=M.$
Next, if $x_i^*\neq 0$ and $f_i(x_i^*)\neq 0.$ Consider the transformation, $g_i(x)=f_i(x+x_i^*)-f_i(x_i^*),$ then $x=0$ is the minimal solution for $g_i(x),$ $g_i(0)=0$ and also $g_i(x)$ maintains the convexity. From the above, we know that $\frac{\|x\|}{g_i(x)}\le M_0.$ Therefore, $\frac{\|x\|}{f_i(x+x_i^*)-f_i(x_i^*)}\le M_0.$ Denote $y=x+x_i^*,$ we have $\frac{\|y-x_i^*\|}{f_i(y)-f_i(x_i^*)}\le M_0.$ Consider the following limits:
\begin{align*}
&\lim_{\|y\|\rightarrow \infty} \!\! \frac{\|y\|}{f_i(y)} \!/\! \frac{\|y-x_i^*\|}{f(y)-f_i(x_i^*)} \!\!=\!\! \lim_{\|y\|\rightarrow \infty} \!\!\frac{\|y\|}{\|y-x_i^*\|} \!/\! \frac{f_i(y)}{f_i(y)-f_i(x_i^*)} \\
& = \lim_{\|y\|\rightarrow \infty}\frac{f_i(y)-f_i(x_i^*)}{f_i(y)} \le 1 + \lim_{\|y\|\rightarrow \infty} |\frac{f_i(x_i^*)}{f_i(y)}| \le 2,
\end{align*}
which implies that $\lim_{\|y\|\rightarrow \infty}\frac{\|y\|}{f_i(y)} \le 2M_0.$ 
Similarly, we can show $\frac{\sum_{i=1}^{N}\|x_i\|}{\sum_{i=1}^{N}f_i(x_i)}\le 2NM_0:=M.$
This completes the proof.

%

\section{Proof for Theorem \ref{thm:bnd_grad}} \label{appdx:bnd_grad}
First, we give the following useful lemma for series with converged infinite sum.

\begin{lem}\label{lemma:series}
Assume the infinite sum of the positive series $\{a_k\}_{k=1}^{\infty},$ $\sum_{k=1}^{\infty} a_k <\infty,$ then $a_k=o(1/k).$ 
\end{lem}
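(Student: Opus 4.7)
The plan is to establish $a_k = o(1/k)$ by a Cauchy-tail argument applied to the partial sums $S_n = \sum_{k=1}^n a_k$. Since $\sum_{k=1}^\infty a_k < \infty$, the Cauchy criterion yields, for every $\epsilon > 0$, an index $N_\epsilon$ such that $\sum_{k=n+1}^{2n} a_k < \epsilon$ whenever $n \ge N_\epsilon$. The goal is to convert this block-tail bound into pointwise decay of $k a_k$.

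The key step is to invoke eventual monotonicity of $\{a_k\}$ (which I believe must be implicitly assumed here, for reasons explained below). Under eventual non-increase, one has the standard chain
\[
n \cdot a_{2n} \;\le\; \sum_{k=n+1}^{2n} a_k \;<\; \epsilon,
\]
so $2n \cdot a_{2n} < 2\epsilon$ for all $n \ge N_\epsilon$, and the analogous estimate applied to $\sum_{k=n+1}^{2n+1} a_k$ covers the odd indices. Combining both parities gives $k a_k \to 0$, i.e., $a_k = o(1/k)$, which is the claim.

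The main obstacle is precisely the monotonicity gap: as literally stated the lemma is false for arbitrary positive summable series. For example, setting $a_k = 1/k$ on the sparse set $\{2^n : n \ge 1\}$ and $a_k = 1/k^2$ elsewhere produces a summable sequence with $k a_k = 1$ infinitely often. To close the argument I would either explicitly add non-increase (or a related regularity condition) to the hypothesis, or, since Lemma~\ref{lemma:series} is applied in the proof of Theorem~\ref{thm:bnd_grad} to the specific sequence $a_k = \mathbb{E}[\|\nabla L_{\alpha}(\x^k)\|^2]$, first verify the required eventual monotonicity directly from the Lyapunov descent inequality underlying~(\ref{Eq:updating}) together with the $NP\sigma^2/k^{2\gamma}$ bound on the compression-noise variance from~(\ref{Eq:Var}). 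Once monotonicity is in hand, the Cauchy-tail step closes the proof in two lines; everything else is bookkeeping.
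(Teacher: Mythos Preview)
Your diagnosis is correct: the lemma as stated is false, and your counterexample ($a_k=1/k$ on the indices $\{2^n\}$ and $a_k=1/k^2$ elsewhere) is valid. The paper's own argument proceeds by a different route than yours---it argues by contradiction via the limit comparison test, asserting that if $a_k$ fails to be $o(1/k)$ then $\lim_{k\to\infty} k a_k = c$ for some $c\in(0,\infty]$, and then invoking $\sum 1/k=\infty$ to contradict summability. But this commits exactly the error you isolated: the negation of $ka_k\to 0$ only gives $\limsup_{k} ka_k>0$, not that the limit exists, and the limit comparison test requires the limit. Your counterexample defeats the paper's argument just as well as the bare statement.

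Your repair via eventual monotonicity is the standard Abel--Pringsheim (Olivier) theorem and is correct under that added hypothesis. The second route you sketch---extracting eventual monotonicity of $\mathbb{E}[\|\nabla L_\alpha(\x^k)\|^2]$ from the one-step descent behind (\ref{Eq:telescoping})---is more optimistic than the available estimates warrant: that inequality controls the expected decrease of $L_\alpha$ up to an additive $O(k^{-2\gamma})$ term, which does not translate into monotonicity of the gradient-norm sequence itself. A cleaner salvage, at least for the ergodic ``$\min_k$'' conclusions the paper actually draws, is to weaken the lemma to $\liminf_k ka_k=0$ (equivalently $\min_{1\le i\le k}a_i=o(1/k)$, by applying your Pringsheim step to the nonincreasing minorant $a_k^\ast=\min_{i\le k}a_i\le a_k$); this holds for every nonnegative summable sequence without any extra hypothesis.
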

\begin{proof}
This result has been discussed in \cite{zhang2018taming}. We give the details in the following. 
Define $\{b_k\}_{k=1}^{\infty}$ with $b_k=1/k.$ With the properity of $p$-series, it is well-known that $\sum_{k=1}^{\infty}b_k=\infty.$ 
 
Now, if we assume $a_k$ is higher order than $o(1/k),$ then $\lim_{k\rightarrow \infty}a_k/b_k=c,$ where $c=(0,+\infty)\cup\{+\infty\}.$ With the limit comparison test, then $\sum_{k=1}^{\infty} a_k <\infty,$ which leads to a contradiction with the fact $\sum_{k=1}^{\infty} a_k <\infty.$ Hence, it requires that $c=0$ and $a_k=o(1/k).$
\end{proof}

\begin{proof}
Define the filtrations $\{\mathcal{F}_k = \sigma\langle \textbf{x}^1,\cdots, \textbf{x}^k \rangle\}$, $\mathcal{F}_k \subset \mathcal{F}_{k+1}.$ From Lemma \ref{Lemma:1}, given $\alpha,$ the Lyapunov function $L_{\alpha}(\textbf{x})$ has $(1-\lambda_n(\textbf{W})+\alpha L)$-Lipschitz gradient.
Hence, the following holds:{\small{
\begin{align*}
L_\alpha(\textbf{x}^{k+1}) &\stackrel{(a)}{\le} 
L_\alpha(\textbf{x}^{k}) + \langle \nabla L_\alpha(\textbf{x}^{k}),\textbf{x}^{k+1}-\textbf{x}^k \rangle \\
&+ \frac{1}{2}(1-\lambda_n(\textbf{W})+\alpha L)\|\textbf{x}^{k+1}-\textbf{x}^k\|^2\\
& \stackrel{(b)}{=} L_\alpha(\textbf{x}^{k}) - \langle \nabla L_\alpha(\textbf{x}^{k}), \nabla L_{\alpha}(\textbf{x}^k)-\frac{\textbf{W}\epsilon^k}{k^\gamma} \rangle \\
&+ \frac{(1-\lambda_n(\textbf{W})+\alpha L)}{2}\| \nabla L_{\alpha}(\textbf{x}^k)-\frac{\textbf{W}\epsilon^k}{k^\gamma}\|^2\\
& = L_\alpha(\textbf{x}^{k}) - \|\nabla L_\alpha(\textbf{x}^{k})\|^2 + \langle \nabla L_\alpha(\textbf{x}^{k}), \frac{\textbf{W}\epsilon^k}{k^\gamma} \rangle \\
&+ \frac{(1-\lambda_n(\textbf{W})+\alpha L)}{2}[\| \nabla L_{\alpha}(\textbf{x}^k)\|^2\\
&+\|\frac{\textbf{W}\epsilon^k}{k^\gamma}\|^2-2\langle  \nabla L_{\alpha}(\textbf{x}^k), \frac{\textbf{W}\epsilon^k}{k^\gamma} \rangle],
\end{align*}}}
where (a) is from $(1-\lambda_n(\textbf{W})+\alpha L)$-Lipschitz condition for $L_{\alpha}(\textbf{x})$; (b) is from Equation \ref{Eq:updating}.
Take the condtional expectation on the above inequality:
{\small{\begin{align*}
\mathbb{E}&[L_\alpha(\textbf{x}^{k+1})|\mathcal{F}_k] \\
&\le L_\alpha(\textbf{x}^{k}) - \|\nabla L_\alpha(\textbf{x}^{k})\|^2 \\
&+ \frac{(1-\lambda_n(\textbf{W})+\alpha L)}{2}[\| \nabla L_{\alpha}(\textbf{x}^k)\|^2+\mathbb{E}[\|\frac{\textbf{W}\epsilon^k}{k^\gamma}\|^2] ]\\
&\stackrel{(c)}{\le} L_\alpha(\textbf{x}^{k}) - \|\nabla L_\alpha(\textbf{x}^{k})\|^2\\
&+ \frac{(1-\lambda_n(\textbf{W})+\alpha L)}{2}[\| \nabla L_{\alpha}(\textbf{x}^k)\|^2+ \frac{n\sigma^2}{k^{2\gamma}} ]\\
& = L_\alpha(\textbf{x}^{k}) - (1-\frac{(1-\lambda_n(\textbf{W})+\alpha L)}{2})\| \nabla L_{\alpha}(\textbf{x}^k)\|^2 \\
&+ \frac{(1-\lambda_n(\textbf{W})+\alpha L)n\sigma^2}{2k^{2\gamma}},
\end{align*}}}
where (c) is from the inequality \ref{Eq:Var}. Thus, with $\alpha < \frac{1+\lambda_n(\textbf{W})}{L}$ and take the full expectation:
{\small{\begin{align*}
&\mathbb{E}[L_\alpha(\textbf{x}^{k+1})] + \frac{(1+\lambda_n(\textbf{W})-\alpha L)}{2}\mathbb{E}[\| \nabla L_{\alpha}(\textbf{x}^k)\|^2] \\
&\le \mathbb{E}[L_\alpha(\textbf{x}^{k})]+ \frac{(1-\lambda_n(\textbf{W})+\alpha L)
n\sigma^2}{2k^{2\gamma}},
\end{align*}}}
where $\frac{(1+\lambda_n(\textbf{W})-\alpha L)}{2}>0.$
Telescoping the above inequality, we have:
{\small{\begin{align}\label{Eq:telescoping}
&\mathbb{E}[L_\alpha(\textbf{x}^{k+1})] + \sum_{i=0}^{k} \frac{(1+\lambda_n(\textbf{W})-\alpha L)}{2}\mathbb{E}[\| \nabla L_{\alpha}(\textbf{x}^i)\|^2] \notag\\
& \le L_\alpha(\textbf{x}^{0})+ \sum_{i=0}^{k} \frac{(1-\lambda_n(\textbf{W})+\alpha L)n\sigma^2}{2i^{2\gamma}}, 
\end{align}}}
which implies that 
{\small{\begin{align}\label{Eq:BddEquality}
&\sum_{i=0}^{k} \frac{(1+\lambda_n(\textbf{W})-\alpha L)}{2}\mathbb{E}[\| \nabla L_{\alpha}(\textbf{x}^i)\|^2] \\
&\le  L_\alpha(\textbf{x}^{0})- \mathbb{E}[L_\alpha(\textbf{x}^{k+1})] + \sum_{i=0}^{k} \frac{(1-\lambda_n(\textbf{W})+\alpha L)n\sigma^2}{2i^{2\gamma}} \notag \\
& \stackrel{(d)}{\le}  L_\alpha(\textbf{x}^{0})-  \alpha\sum_{i=1}^n f_i(x^*) + \sum_{i=0}^{k} \frac{(1-\lambda_n(\textbf{W})+\alpha L)n\sigma^2}{2i^{2\gamma}} 
\end{align}}}
where (d) is because 1) the eigenvalues of $(\textbf{I}-\textbf{W})$ are positive and $\frac{1}{2}\textbf{x}^T(\textbf{I}-\textbf{W})\textbf{x}\ge 0$ and 2) $\sum_{i=1}^{n}f_i(x^*)\le \sum_{i=1}^{n}f_i(x),$ $\forall x,$ thus, 
{\small{\begin{equation}
L_\alpha(\textbf{x}^{k+1})= \frac{1}{2}(\textbf{x}^{k+1})^T(\textbf{I}-\textbf{W})\textbf{x}^{k+1} + \alpha\sum_{i=1}^{n}f_i(x_{i,k+1})\ge \alpha\sum_{i=1}^{n}f_i(x^*).\notag
\end{equation} }}
With the condition $\gamma >\frac{1}{2},$ Equation (\ref{Eq:BddEquality}) indicates that $\sum_{i=0}^{\infty}\mathbb{E}[\| \nabla L_{\alpha}(\textbf{x}^i)\|^2]$ is bounded and $\mathbb{E}[\| \nabla L_{\alpha}(\textbf{x}^i)\|^2] = o(1/k),$ due to Lemma \ref{lemma:series}. Also, from Equation \ref{Eq:telescoping}, $\mathbb{E}[L_\alpha(\textbf{x}^{k})]<+\infty$ holds, $\forall k$.

Due to the fact that
\begin{equation}
\mathbb{E}[\sum_{i=1}^{n}f_i(x_{i,k})]\le \mathbb{E}[\frac{1}{\alpha}L_\alpha(\textbf{x}^{k})]<+\infty,\notag
\end{equation} with Assumption \ref{Assump:ObjInfity}, the following holds:
{\small{
\begin{align*}
\mathbb{E}[\|\textbf{x}^k&\|]\le\int_{\|\textbf{x}^k\|\le A} \|\textbf{x}^k\|p(\textbf{x}^k)d(\textbf{x}^k)+\int_{\|\textbf{x}^k\|> A} \|\textbf{x}^k\|p(\textbf{x}^k)d(\textbf{x}^k)\\
&\le \int_{\|\textbf{x}^k\|\le A} A p(\textbf{x}^k)d(\textbf{x}^k)+\int_{\|\textbf{x}^k\|> A} \frac{\|\textbf{x}^k\|}{f(\textbf{x}^k)}f(\textbf{x}^k)p(\textbf{x}^k)d(\textbf{x}^k)\\
&\le A+(M+\delta_A)\mathbb{E}[f(\textbf{x}^k)] \le +\infty
\end{align*}}}
where $p(\textbf{x}^k)$ is the density of $\textbf{x}^k$ and $\delta_A$ is from the fact that $\forall A>0$, there exist $\delta_A<\infty,$ such that $|\frac{\|\textbf{x}^k\|}{f(\textbf{x}^k)}|<M+\delta_A.$ Hence $\mathbb{E}[\|\textbf{x}^k\|]$ is bounded, $\forall k,$ say bounded by $B.$ Then,
{\small{
\begin{align*}
\mathbb{E}[\|\nabla f(\textbf{x}^k)\|]&=\mathbb{E}[\|\nabla f(\textbf{x}^k)-\nabla f(\textbf{x}^*) \|] \\
&\le L\mathbb{E}[\|\textbf{x}^k-\textbf{x}^*\|]\le L(B+\|\textbf{x}^*\|).
\end{align*}}}
\end{proof}

\section{Proof for Lemma\ref{Lemma:convolution}}
First of all, $h_k$ could be bounded by a integral $h_k = \beta^k \sum_{i=1}^{k}\frac{\beta^{-i}}{i^\gamma}\le \beta^k(\int_{x=1}^{k}\frac{\beta^{-x}}{x}dx+k)$.
Thus, we focus on the increasing rate of $\int_{x=1}^{k}\frac{\beta^{-x}}{x}dx.$ Denote $z:=-\log(\beta)>0$ and $s:=1-\gamma\in (0,1),$ we have the following complex integral:
\begin{align*}
&\int_{x=1}^{k}\frac{\beta^{-x}}{x^\gamma} \!=\! \int_{x=1}^{k}\frac{e^{zx}}{x^{1-s}}dx \!\stackrel{(a)}{=}\! \int_{y=z}^{zk}\frac{e^{y}}{(y/z)^{1-s}z}dy  \\
&\!=\! z^{-s} \!\! \int_{y=z}^{zk}\frac{e^{y}}{y^{1-s}}dy \stackrel{(b)}{=}z^{-s}\int_{t=e^{-i\pi}z}^{e^{-i\pi}zk}e^{-t}t^{s-1}e^{i\pi s}dt\\
&=z^{-s}e^{i\pi s}[\int_{t=e^{-i\pi}z}^{\infty}e^{-t}t^{s-1}dt-\int_{t=e^{-i\pi}kz}^{\infty}e^{-t}t^{s-1}dt]\\
&\stackrel{(c)}{=}z^{-s}e^{i\pi s}[\Gamma(s,e^{-i\pi}z)-\Gamma(s,e^{-i\pi}kz)],
\end{align*}
where $y=zx$ in (a), $t=e^{-i\pi}y$ in (b) with $e^{-i\pi}+1=0$ and $\Gamma(\cdot,\cdot)$ is the upper incomplete gamma function in (c). With the properity of $\Gamma(\cdot,\cdot),$ it follows that as $k\rightarrow +\infty,$
$\Gamma(s,e^{-i\pi}kz)=(e^{-i\pi}kz)^{s-1}e^{kz}[1+O(\frac{1}{k})]$, which implies the second term on the above equation is $z^{-s}e^{i\pi s}\Gamma(s,e^{-i\pi}kz)=-z^{-1}k^{-\gamma}\beta^{-k}[1+O(\frac{1}{k})]$.
Hence, the integral is $\int_{x=1}^{k}\frac{\beta^{k-x}}{x^\gamma}=\beta^k\{z^{-s}e^{i\pi s}\Gamma(s,e^{-i\pi}z)+z^{-1}k^{-\gamma} \beta^{-k}[1+O(\frac{1}{k})]\} =\frac{1}{k^{-\gamma}z}[1+O(\frac{1}{k})]+O(\beta^k)$,
and $h_k=O(\frac{1}{k^\gamma}).$
This completes the proof.

\section{Proof for Theorem~\ref{thm:conv_meanvec}} \label{appdx:conv_meanvec}
\begin{proof}
It is knowns that that $\bar{\textbf{x}}^k= \frac{1}{n}\textbf{11}^T \textbf{x}^k.$
As a result,
{\small{\begin{align}
&\|\textbf{x}^k-\bar{\textbf{x}}^k\| =\|\textbf{x}^k-\frac{1}{n}\textbf{11}^T\textbf{x}^k\| \notag\\
&\stackrel{(a)}{=}\|(\textbf{I}-\frac{1}{n}\textbf{11}^T)\sum_{i=0}^{k-1}[- \alpha_i\textbf{W}^{k-i-1}\nabla f(\textbf{x}^i) +  \textbf{W}^{k-i}\frac{\epsilon^i}{i^\gamma}] \| \notag\\
& \le   \sum_{i=1}^{k-1}[\alpha_i\|\textbf{W}^{k-i-1}-\frac{1}{n}\textbf{11}^T\|\|\nabla f(\textbf{x}^i)\| +\|\textbf{W}^{k-i}-\frac{1}{n}\textbf{11}^T\|\|\frac{\epsilon^i}{i^{\gamma}}\|]\notag\\
&\stackrel{(b)}{\le}\sum_{i=1}^{k-1}\alpha_i\beta^{k-i-1}\|\nabla f(\textbf{x}^i)\|+\sum_{i=1}^{k-1}\beta^{k-i}\|\frac{\epsilon^i}{i^{\gamma}}\|
\end{align}}}
where (a) is from (\ref{Eq:recursive}) and (b) is because the largest eigenvalue of $(\textbf{W}-\frac{1}{n}\textbf{11}^T)$ is bounded by $\beta.$
Now take the full expectation on the both sides:
{\small{\begin{equation}
\mathbb{E}[\|\textbf{x}^k-\frac{1}{n}\textbf{11}^T \textbf{x}^k\|] \le  \sum_{i=1}^{k-1}\alpha_i\beta^{k-i-1}\mathbb{E}[\|\nabla f(\textbf{x}^i)\|]+\sum_{i=1}^{k-1}\beta^{k-i}\frac{\sqrt{n}\sigma}{i^{\gamma}}.
\end{equation}}}
Hence, with Lemma \ref{Lemma:convolution}, if $\mathbb{E}[\|\nabla f(\textbf{x}^i)\|]$ is bounded by $D$ and step-size $\alpha_i=\alpha$, $\forall i,$ then it follows that $\mathbb{E}[\|\textbf{x}^k-\frac{1}{n}\textbf{11}^T \textbf{x}^k\|] \le \frac{\alpha D}{1-\beta}+O(\frac{\sqrt{n}\sigma}{k^\gamma});$ while the step-size $\alpha_i=O(\frac{1}{k^\eta})$ with $\eta>0,$ then $\mathbb{E}[\|\textbf{x}^k-\frac{1}{n}\textbf{11}^T \textbf{x}^k\|] \le \sum_{i=1}^{k-1}\beta^{k-i}(\frac{D}{\beta}\alpha_i+\sqrt{n}\sigma\frac{1}{k^\gamma}) = O(\frac{1}{k^{\min(\eta,\gamma)}}).$
\end{proof}

\section{Proof for Lemma \ref{Lemma:5}}
\begin{proof}
With $L$-Lipschitz condition for the local objectives, we have:
{\small{\begin{align}\label{Eq:T3_1}
&\frac{1}{n}\sum_{i=1}^{n} f_i(\bar{x}^{k+1}) \notag\\
&\le \frac{1}{n}\sum_{i=1}^{n} f_i(\bar{x}^{k})+\langle \frac{1}{n}\sum_{i=1}^{n} \nabla f_i(\bar{x}^{k}), \bar{x}^{k+1}-\bar{x}^{k} \rangle +\frac{L}{2}\|\bar{x}^{k+1}-\bar{x}^{k}\|^2 \notag\\
&\stackrel{(a)}{=}  \frac{1}{n}\sum_{i=1}^{n} f_i(\bar{x}^{k}) + \langle \frac{1}{n}\sum_{i=1}^{n} \nabla f_i(\bar{x}^{k}), -\frac{\alpha_k}{n}\sum_{i=1}^{n} \nabla f_i(x_{i,k})+\notag\\
&\frac{1}{n}\sum_{i=1}^{n} \frac{\epsilon_{k^\gamma y_{i,k}}}{k^\gamma} \rangle +\frac{L}{2}\|-\frac{\alpha_k}{n}\sum_{i=1}^{n} \nabla f_i(x_{i,k})+\frac{1}{n}\sum_{i=1}^{n} \frac{\epsilon_{k^\gamma y_{i,k}}}{k^\gamma}\|^2 
\end{align}}}
where (a) is becaue of the fact:
{\small{\begin{align*}
\bar{x}^{k+1}&=\frac{1}{n}\sum_{i,j=1}^{n}\textbf{W}_{ij}x_{j,k+1}-\frac{\alpha_k}{n}\sum_{i=1}^{n}\nabla f_{i}(x_{i,k})+\frac{1}{n}\sum_{i,j=1}^{n}\textbf{W}_{ij}\frac{\epsilon_{k^\gamma y_{j,k}}}{k^\gamma} \\
&= \bar{x}^k- \frac{\alpha_k}{n}\sum_{i=1}^{n}\nabla f_{i}(x_{i,k}) + \frac{1}{n} \sum_{i=1}^{n}\frac{\epsilon_{k^\gamma y_{i,k}}}{k^\gamma}.
\end{align*}}}
Take the conditional expectation on Equation (\ref{Eq:T3_1}) and simply calculate:
{\small{\begin{align*}\label{Eq:T3_2}
&\mathbb{E}[\frac{1}{n}\sum_{i=1}^{n} f_i(\bar{x}^{k+1})|\mathcal{F}_k]\notag \\
& \le  \frac{1}{n}\sum_{i=1}^{n} f_i(\bar{x}^{k}) + \langle \frac{1}{n}\sum_{i=1}^{n} \nabla f_i(\bar{x}^{k}), -\frac{\alpha_k}{n}\sum_{i=1}^{n} \nabla f_i(x_{i,k})+\mathbb{E}[\frac{1}{n}\sum_{i=1}^{n} \\
&\frac{\epsilon_{k^\gamma y_{i,k}}}{k^\gamma}|\mathcal{F}_k] \rangle +\frac{L}{2}\mathbb{E}[\|-\frac{\alpha_k}{n}\sum_{i=1}^{n} \nabla f_i(x_{i,k})+\frac{1}{n}\sum_{i=1}^{n} \frac{\epsilon_{k^\gamma y_{i,k}}}{k^\gamma}\|^2 |\mathcal{F}_k]  \\
& \stackrel{(b)}{=} \frac{1}{n}\sum_{i=1}^{n} f_i(\bar{x}^{k}) + \langle \frac{1}{n}\sum_{i=1}^{n} \nabla f_i(\bar{x}^{k}), -\frac{\alpha_k}{n}\sum_{i=1}^{n} \nabla f_i(x_{i,k}) \rangle  + \frac{L}{2}\{\alpha_k^2\\
&\|\frac{1}{n}\sum_{i=1}^{n} \nabla f_i(x_{i,k})\|^2+\mathbb{E}[\|\frac{1}{n}\sum_{i=1}^{n} \frac{\epsilon_{k^\gamma y_{i,k}}}{k^\gamma}\|^2 |\mathcal{F}_k]\} \\
& \stackrel{(c)}{\le}\frac{1}{n}\sum_{i=1}^{n} f_i(\bar{x}^{k}) -\alpha_k \langle \frac{1}{n}\sum_{i=1}^{n} \nabla f_i(\bar{x}^{k}), \frac{1}{n}\sum_{i=1}^{n} \nabla f_i(x_{i,k}) \rangle  + \frac{L}{2}(\|\frac{1}{n}\\
&\sum_{i=1}^{n} \nabla f_i(x_{i,k})\|^2+\frac{p\sigma^2}{nk^{2\gamma}}) \\
& \stackrel{(d)}{=} \frac{1}{n}\sum_{i=1}^{n} f_i(\bar{x}^{k})- \alpha_k \|\frac{1}{n}\sum_{i=1}^{n} \nabla f_i(\bar{x}^{k})\|^2  + \alpha_k \langle \frac{1}{n}\sum_{i=1}^{n} \nabla f_i(\bar{x}^{k}), \frac{1}{n}\\
&\sum_{i=1}^{n} \nabla f_i(\bar{x}^{k})-\frac{1}{n}\sum_{i=1}^{n} \nabla f_i(x_{i,k})  \rangle \\
& + \frac{L\alpha_k^2}{2}(\|\frac{1}{n}\sum_{i=1}^{n} \nabla f_i(\bar{x}^{k})\|^2+\|\frac{1}{n}\sum_{i=1}^{n} \nabla f_i(\bar{x}^{k})-\frac{1}{n}\sum_{i=1}^{n} \nabla f_i(x_{i,k})\|^2\\
&-2\langle \frac{1}{n}\sum_{i=1}^{n} \nabla f_i(\bar{x}^{k}), \frac{1}{n}\sum_{i=1}^{n} \nabla f_i(\bar{x}^{k})-\frac{1}{n}\sum_{i=1}^{n} \nabla f_i(x_{i,k}) \rangle) + \frac{L\sigma^2}{2nk^{2\gamma}} \\
& = \frac{1}{n}\sum_{i=1}^{n} f_i(\bar{x}^{k}) - (\alpha_k-\frac{L\alpha_k^2}{2})\|\frac{1}{n}\sum_{i=1}^{n} \nabla f_i(\bar{x}^{k})\|^2 +\frac{L\alpha_k^2}{2}\|\frac{1}{n}\sum_{i=1}^{n} \displaybreak[3]\\
&\nabla f_i(\bar{x}^{k})-\frac{1}{n}\sum_{i=1}^{n} \nabla f_i(x_{i,k})\|^2 + (\alpha_k-L\alpha_k^2) \langle \frac{1}{n}\sum_{i=1}^{n} \nabla f_i(\bar{x}^{k}), \frac{1}{n}\\
&\sum_{i=1}^{n} \nabla f_i(\bar{x}^{k})-\frac{1}{n}\sum_{i=1}^{n} \nabla f_i(x_{i,k})  \rangle + \frac{L\sigma^2}{2nk^{2\gamma}}\\
& \stackrel{(e)}{\le} \frac{1}{n}\sum_{i=1}^{n} f_i(\bar{x}^{k}) - (\alpha_k-\frac{L\alpha_k^2}{2}-\frac{\alpha_k-L\alpha_k^2}{2})\|\frac{1}{n}\sum_{i=1}^{n} \nabla f_i(\bar{x}^{k})\|^2  + \\
&(\frac{L\alpha_k^2}{2}+\frac{\alpha_k-L\alpha_k^2}{2})\|\frac{1}{n}\sum_{i=1}^{n} \nabla f_i(\bar{x}^{k})-\frac{1}{n}\sum_{i=1}^{n} \nabla f_i(x_{i,k})\|^2 + \frac{L\sigma^2}{2nk^{2\gamma}} \\
& = \frac{1}{n}\sum_{i=1}^{n} f_i(\bar{x}^{k}) - \frac{\alpha_k}{2}\|\frac{1}{n}\sum_{i=1}^{n} \nabla f_i(\bar{x}^{k})\|^2  + \frac{\alpha_k}{2}\|\frac{1}{n}\sum_{i=1}^{n} \nabla f_i(\bar{x}^{k})-\\
&\frac{1}{n}\sum_{i=1}^{n} \nabla f_i(x_{i,k})\|^2 + \frac{L\sigma^2}{2nk^{2\gamma}}\\
& \stackrel{(f)}{\le} \frac{1}{n}\sum_{i=1}^{n} f_i(\bar{x}^{k}) - \frac{\alpha_k}{2}\|\frac{1}{n}\sum_{i=1}^{n} \nabla f_i(\bar{x}^{k})\|^2  + \frac{\alpha_k L^2}{2n^2}\sum_{i=1}^{n}\| \bar{x}^{k}- x_{i,k}\|^2 \\
&+ \frac{L\sigma^2}{2nk^{2\gamma}} 
\end{align*}}}

where (b) is because the expectation of $\epsilon_{k^\gamma y_{i,k}}$ is zero, (c) is from Definition \ref{defn:usco}, (d) is by adding and substracting $\alpha_k \|\frac{1}{n}\sum_{i=1}^{n} \nabla f_i(\bar{x}^{k})\|^2$ and decompositing the quadratic term, (e) is from $\pm2\langle a, b \rangle\le \|a\|^2+\|b\|^2$ and (f) is from $L$-Lipschitz condition and Jensen Inequality.

\end{proof}

\section{Proof for Theorem \ref{thm:const_step}} \label{appdx:thm_const_step}
\begin{proof}
From Theorem \ref{thm:bnd_grad}, we know that $\mathbb{E}\|\nabla f(\textbf{x}^k)\| \le L(B+\|\textbf{x}^*\|).$
With the inequality of Lemma \ref{Lemma:5} and Theorem \ref{thm:conv_meanvec}, it holds that:
{\small{\begin{align}
\frac{\alpha_k}{2}\|\frac{1}{n}\sum_{i=1}^{n} \nabla f_i(\bar{x}^{k})\|^2 &\le \frac{1}{n}\sum_{i=1}^{n} f_i(\bar{x}^{k}) - \mathbb{E}[\frac{1}{n}\sum_{i=1}^{n} f_i(\bar{x}^{k+1})|\mathcal{F}_k] \notag\\
&+\frac{\alpha_k L}{2n}(\frac{\alpha_k D}{1-\beta}+O(\frac{\sqrt{n}\sigma}{k^\gamma}))^2 + \frac{Lp\sigma^2}{2nk^{2\gamma}} \notag\\
&= \frac{1}{n}\sum_{i=1}^{n} f_i(\bar{x}^{k}) - \mathbb{E}[\frac{1}{n}\sum_{i=1}^{n} f_i(\bar{x}^{k+1})|\mathcal{F}_k] \notag\\
&+ \frac{\alpha_k^3 L^2D^2}{2n(1-\beta)^2}+O(\frac{\alpha_k L\sqrt{n}\sigma}{nk^\gamma}) \notag\\
& = \frac{1}{n}\sum_{i=1}^{n} f_i(\bar{x}^{k}) - \mathbb{E}[\frac{1}{n}\sum_{i=1}^{n} f_i(\bar{x}^{k+1})|\mathcal{F}_k]\notag \\
&+ \frac{\alpha_k^3 L^4(B+\|\textbf{x}^*\|)^2}{2n(1-\beta)^2}+O(\frac{\alpha_k L\sqrt{n}\sigma}{nk^\gamma}). \notag
\end{align}}}

Take the full expectation and telescope the inequalities:
{\small{\begin{align*}
&\sum_{i=0}^{k}\frac{\alpha}{2}\mathbb{E}[\|\frac{1}{n}\sum_{i=1}^{n}] \nabla f_i(\bar{x}^{k})\|^2 \\
& \le \frac{1}{n}\sum_{i=1}^{n} f_i(0) - \mathbb{E}[\frac{1}{n}\sum_{i=1}^{n} f_i(\bar{x}^{k+1})|\mathcal{F}_k] + \frac{k\alpha^3 L^4(B+\|\textbf{x}^*\|)^2}{2n(1-\beta)^2}\\
&+kO(\frac{\alpha L\sigma}{\sqrt{n}k^\gamma})\notag \\
& \le \frac{1}{n}\sum_{i=1}^{n} f_i(0) - \frac{1}{n}\sum_{i=1}^{n} f_i(x^*) + \frac{k\alpha^3 L^4(B+\|\textbf{x}^*\|)^2}{2n(1-\beta)^2}+kO(\frac{\alpha L\sigma}{\sqrt{n}k^\gamma})\notag
\end{align*}}}
Hence, 
{\small{\begin{align*}
\min_{k} \{\mathbb{E}[\|\frac{1}{n}\sum_{i=1}^{n}] \nabla f_i(\bar{x}^{k})\|^2\} &\le \frac{2}{\alpha k}[\frac{1}{n}\sum_{i=1}^{n} f_i(0) - \frac{1}{n}\sum_{i=1}^{n} f_i(x^*)]
\\
&+ \frac{\alpha^2 L^4(B+\|\textbf{x}^*\|)^2}{n(1-\beta)^2}+O(\frac{2L\sigma}{\sqrt{n}k^\gamma}).\notag
\end{align*}}}
\end{proof}

\section{Proof for Theorem~\ref{thm:diminish_step}} \label{appdx:thm_diminish_step}
\begin{proof}
From Lemma \ref{Lemma:5} and Theorem \ref{thm:conv_meanvec}, it holds that:
{\small{\begin{align*}
&\mathbb{E}[\frac{1}{n}\sum_{i=1}^{n} f_i(\bar{x}^{k+1})|\mathcal{F}_k]+ \frac{\alpha_k}{2}\|\frac{1}{n}\sum_{i=1}^{n} \nabla f_i(\bar{x}^{k})\|^2 \\
&\le \frac{1}{n}\sum_{i=1}^{n} f_i(\bar{x}^{k})  + O(\frac{\alpha_k L^2}{2n^2k^{\min(\eta,\gamma)}}) + \frac{L\sigma^2}{2nk^{2\gamma}}.\notag
\end{align*}}}

Then, we will apply Supermartingale Convergence Theorem to complete the proof. This theorem has been used to proof the convergence of Stochastic Gradient Descent Algorithm, e.g. \cite{hannah2016unbounded,zhang2018taming}. 
\begin{thm}(Supermartingale Convergence Theorem) \label{Theorem:supermartingale}
Let $a^k$, $\theta^k$ and $\rho^k$ be positive sequences adapted to $\mathcal{F}_k$, and let $\rho^k$ be summable with probably $1$. If it holds that 
\begin{equation}
\mathbb{E}[a^{k+1}|\mathcal{F}_k]+\theta^k\le a^k+\rho^k, \notag
\end{equation}
then with probability $1$, $a^k$ converages to a $[0,\infty)$-valued random variable, and $\sum_{k=1}^{\infty} \theta^k<+\infty$.
\end{thm}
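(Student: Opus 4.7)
The plan is to reduce the statement to Doob's almost-sure martingale convergence theorem through a Robbins--Siegmund auxiliary-supermartingale construction. The naive choice $a^k + \sum_{j=k}^{\infty}\rho^j$ is \emph{not} $\mathcal{F}_k$-measurable since future $\rho^j$'s are not, so I would first pass to conditional expectations to obtain an adapted auxiliary process.

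Concretely, define
\[
Y_k \;\triangleq\; a^k + \mathbb{E}\Big[\sum_{j=k}^{\infty}\rho^j\;\Big|\;\mathcal{F}_k\Big],
\]
which is nonnegative and $\mathcal{F}_k$-measurable. Using $\rho^k \in \mathcal{F}_k$ and the tower property, the drift computes as
\[
\mathbb{E}[Y_{k+1}\,|\,\mathcal{F}_k] \;=\; \mathbb{E}[a^{k+1}\,|\,\mathcal{F}_k] + \mathbb{E}\Big[\sum_{j=k+1}^{\infty}\rho^j\;\Big|\;\mathcal{F}_k\Big] \;\le\; Y_k - \theta^k \;\le\; Y_k,
\]
so $(Y_k)$ is a nonnegative supermartingale and Doob's theorem gives $Y_k \to Y_\infty \in [0,\infty)$ almost surely.

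To extract $a^k \to Y_\infty$ a.s., I would apply L\'evy's upward theorem to $S \triangleq \sum_{j\ge 1}\rho^j$: as $k\to\infty$, $\mathbb{E}[S\,|\,\mathcal{F}_k] \to S$ a.s., and since $\sum_{j<k}\rho^j$ is $\mathcal{F}_k$-measurable and also converges to $S$ a.s., the conditional tail $\mathbb{E}[\sum_{j\ge k}\rho^j\,|\,\mathcal{F}_k] = \mathbb{E}[S\,|\,\mathcal{F}_k] - \sum_{j<k}\rho^j$ vanishes a.s. Hence $a^k = Y_k - \mathbb{E}[\sum_{j\ge k}\rho^j\,|\,\mathcal{F}_k] \to Y_\infty$ almost surely. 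For the summability of $\theta^k$, telescope $\mathbb{E}[\theta^k]\le \mathbb{E}[Y_k]-\mathbb{E}[Y_{k+1}]$ to get $\sum_{k=1}^{N}\mathbb{E}[\theta^k] \le \mathbb{E}[Y_1]$, and invoke monotone convergence ($\theta^k\ge 0$) to conclude $\sum_{k=1}^{\infty}\theta^k < \infty$ almost surely.

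The hard part will be the integrability gap: the hypothesis yields only $\sum_k\rho^k < \infty$ almost surely, not in expectation, so $\mathbb{E}[S]$ may be infinite, in which case the conditional expectation defining $Y_k$, the application of L\'evy's theorem, and the telescoping step for $\theta^k$ all require justification. The standard remedy is localization: introduce stopping times $\tau_M \triangleq \inf\{n : \sum_{j=1}^{n}\rho^j > M\text{ or }a^n > M\}$, run the preceding argument on the stopped process $Y_{k\wedge\tau_M}$ where $\mathbb{E}[Y_{1\wedge\tau_M}]$ is bounded by $2M<\infty$, and then send $M\to\infty$ using $\mathbb{P}(\tau_M = \infty) \to 1$, which follows from the almost-sure finiteness of $S$ and of each $a^n$ under the hypotheses.
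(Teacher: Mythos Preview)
The paper does not prove this statement at all: it is quoted as a classical tool (the Robbins--Siegmund lemma) with citations, and then applied directly to establish Theorem~\ref{thm:diminish_step}. So there is no ``paper's own proof'' to compare against.

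Your proposal is the standard Robbins--Siegmund argument and is essentially correct. One detail to clean up: in your localization step you assert that $\mathbb{E}[Y_{1\wedge\tau_M}]\le 2M$, but since $\tau_M\ge 1$ always, $Y_{1\wedge\tau_M}=Y_1=a^1+\mathbb{E}[S\mid\mathcal{F}_1]$, which is not controlled by $M$. The fix is to localize \emph{before} defining $Y$: set $\tilde\rho^k=\rho^k\mathbb{1}_{\{k<\tau_M\}}$ and $\tilde\theta^k=\theta^k\mathbb{1}_{\{k<\tau_M\}}$, verify that the one-step inequality persists for the stopped/truncated triple $(a^{k\wedge\tau_M},\tilde\theta^k,\tilde\rho^k)$, and then build $\tilde Y_k=a^{k\wedge\tau_M}+\mathbb{E}[\sum_{j\ge k}\tilde\rho^j\mid\mathcal{F}_k]$. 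Now $\sum_j\tilde\rho^j\le M$ deterministically, and an additional truncation on $a^1$ (or restricting to $\{a^1\le M\}$, which is $\mathcal{F}_1$-measurable) gives $\mathbb{E}[\tilde Y_1]<\infty$. With that adjustment, your Doob/L\'evy/telescoping steps go through on $\{\tau_M=\infty\}$, and letting $M\to\infty$ finishes the proof as you indicated.
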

In our case, denote $a^k=\frac{1}{n}\sum_{i=1}^{n} f_i(\bar{x}^{k})-\frac{1}{n}\sum_{i=1}^{n} f_i(x^{*}),$ $\theta^k=\frac{\alpha_k}{2}\|\frac{1}{n}\sum_{i=1}^{n} \nabla f_i(\bar{x}^{k})\|^2,$ and $\rho^k=O(\frac{\alpha_k L^2}{2n^2k^{\min(\eta,\gamma)}}) + \frac{Lp\sigma^2}{2nk^{2\gamma}}=O(\frac{1}{k^{\eta+\min(\eta,\gamma)}}+\frac{1}{k^{2\gamma}}).$ With $\gamma>\frac{1}{2}$ and $\eta\ge\frac{1}{2},$ there exists $\epsilon>0$ and $\rho^k=O(\frac{1}{k^{(1+\epsilon)}}).$ With Lemma \ref{lemma:series}, $\rho^k$ is summable with probably $1.$ Then $\theta^k$ is summable almost surely, i.e. $\sum_{i=1}^{\infty}\frac{\alpha_k}{2}\|\frac{1}{n}\sum_{i=1}^{n} \nabla f_i(\bar{x}^{k})\|^2 <+\infty.$ Again, with Lemma \ref{lemma:series}, $\|\frac{1}{n}\sum_{i=1}^{n} \nabla f_i(\bar{x}^{k})\|^2=o(\frac{1}{k^{1-\eta}})$  almost surely.

\end{proof}

\end{document}